\documentclass[letterpaper,twocolumn,prl,aps,superscriptaddress,amsmath,amssymb,floatfix]{revtex4-2}
\usepackage{mathptmx}
\usepackage[latin9]{inputenc}
\usepackage{color}
\usepackage{upgreek}
\usepackage{amsmath}
\usepackage{amssymb}
\usepackage{calrsfs}
\DeclareMathAlphabet{\mathcal}{OMS}{zplm}{m}{n}
\SetMathAlphabet\mathcal{bold}{OMS}{zplm}{bx}{n}
\usepackage{graphicx}
\usepackage{esint}
\usepackage{bm}
\usepackage{bbold}
\usepackage[hypertexnames=false, breaklinks=true, bookmarksnumbered=true, bookmarksopen=true, colorlinks=true, linktocpage=true, citecolor=blue, urlcolor=magenta, linkcolor=magenta]{hyperref}
\makeatletter

\providecommand{\href@noop}[2]{#2}
\pdfpageheight\paperheight
\pdfpagewidth\paperwidth

\usepackage{textcomp}
\usepackage{epstopdf}

\usepackage{amsthm}
\usepackage{dsfont}
\theoremstyle{plain}
\newtheorem{theorem}{Theorem}[section]

\newtheorem{corollary}[theorem]{Corollary}

\theoremstyle{definition}

\theoremstyle{remark}
\newtheorem{remark}[theorem]{Remark}

\pdfpageheight\paperheight
\pdfpagewidth\paperwidth

\@ifundefined{textcolor}{}{%
 \definecolor{BLACK}{gray}{0}
 \definecolor{WHITE}{gray}{1}
 \definecolor{RED}{rgb}{1,0,0}
 \definecolor{GREEN}{rgb}{0,1,0}
 \definecolor{BLUE}{rgb}{0,0,1}
 \definecolor{CYAN}{cmyk}{1,0,0,0}
 \definecolor{MAGENTA}{cmyk}{0,1,0,0}
 \definecolor{YELLOW}{cmyk}{0,0,1,0}
}

\usepackage{xcolor}\usepackage{soul}
\newcommand{\bra}[1]{\ensuremath{\left\langle#1\right|}}
\newcommand{\ket}[1]{\ensuremath{\left|#1\right\rangle}}

\newcommand{\B}[0]{\mathcal B}

\newcommand{\e}[0]{\exists\,}

\newcommand{\f}[0]{\forall\,}

\renewcommand{\H}[0]{\mathcal{H}}

\newcommand{\tr}[0]{\operatorname{tr}}

\newcommand{\norm}[1]{\lVert #1\rVert}

\newcommand{\mi}{\mathrm{i}}
\newcommand{\me}{\mathrm{e}}
\newcommand{\id}{\mathbb{1}}

\definecolor{Red}{rgb}{0.70,0.13,0.13}
\definecolor{Green}{rgb}{0.13,0.55,0.13}
\definecolor{Blue}{rgb}{0.14,0.29,0.51}

\hypersetup{
  linkcolor=Red,
  citecolor=Blue,
  urlcolor=Blue
}

\makeatother
\begin{document}
\title{Synchronized Spin Trajectories under Collective Weak Measurements}
\author{Yiwen Han}
\affiliation{CESQ/ISIS (UMR 7006), CNRS and Universit\'{e} de Strasbourg, 67000 Strasbourg, France}
\author{Konghao Sun}
\affiliation{Beijing National Laboratory for Condensed Matter Physics,Institute of Physics, Chinese Academy of Sciences, Beijing 100190, China}
\author{Wei Yi}
\affiliation{Laboratory of Quantum Information, University of Science and Technology of China, Hefei 230026, China}
\affiliation{Anhui Province Key Laboratory of Quantum Network, University of Science and Technology of China, Hefei, 230026, China}
\affiliation{CAS Center For Excellence in Quantum Information and Quantum Physics, Hefei 230026, China}
\author{Johannes Schachenmayer}
\affiliation{CESQ/ISIS (UMR 7006), CNRS and Universit\'{e} de Strasbourg, 67000 Strasbourg, France}
\date{\today}
\begin{abstract}
We consider locally driven-dissipative quantum spins that undergo a continuous weak measurement of their collective spin. We show that such a system can exhibit full spin-synchronization in trajectories corresponding to the same measurement record. Mathematically, we demonstrate that this is a general consequence of the irreducibility of the quantum Markov semigroup generated by the Lindbladian, in combination with a transitive symmetry, in particular spin-permutation symmetry. When breaking the irreducibility, e.g.~in the absence of local dissipation, the system can exhibit a weaker form of partial synchronization. Given the general origin, this measurement-induced phenomenon arises in a broad class of many-body systems.     
\end{abstract}
\maketitle

\emph{Introduction---}Synchronization, the phenomenon where the motion of numerous degrees of freedom are locked together, is a paradigmatic form of dynamical order~\cite{Pikovsky2001,Kuramoto1975,Acebron2005}.
It can apply to both classical and quantum settings~\cite{Pikovsky2001,Arenas2008,Mari2013,EshaqiSani2020,Schmolke2026,Solanki2026Review}, ranging from coupled phase oscillators~\cite{Kuramoto1975,Acebron2005,Pikovsky2001,Arenas2008} to driven-dissipative many-body systems~\cite{Cabot2019,Schmolke2024,Dai2026,Solanki2026Review}.
By suppressing fluctuations in relative phases and frequencies, and by reinforcing collective signals, synchronization is relevant to sensing~\cite{Bekker2017,DuttaCooper2019,Xu2020,Xu2022,Vaidya2025} and metrology~\cite{Meiser2009,Xu2015,Roth2016}, has found applications in timekeeping, communication networks, oscillator and laser arrays, information processing, and thermal machines~\cite{Meiser2009,Roth2016,Nixon2012,Bekker2017,Jaseem2020,
Solanki2026Review,Esencan2026}.
In open quantum systems, most routes to synchronization are based on interactions between oscillators (including limit-cycle oscillators) or spins, 
involve models whose semiclassical counterparts retain the structure of Kuramoto-type problems~\cite{Heinrich2011,Walter2015,Zhu2015,Delmonte2023,Lee2013,Liu2026}, or arise from the selection of long-lived modes~\cite{Giorgi2013,Manzano2013,Buca2022}.
Within this category is a recently discovered measurement-induced synchronization~\cite{Schmolke2024}, where the measurement backaction localizes the system in decoherence-free subspaces, yielding frequency-synchronized oscillations.

In this work, we demonstrate a different type of collective measurement-induced quantum synchronization, where the dynamics of all spins matches at late times within a single trajectory corresponding to a certain measurement record. The phenomenon may be considered as a quantum counterpart of noise-assisted synchronization in an ensemble of classical uncoupled limit-cycle oscillators~\cite{Nakao2007}. However, we propose a distinct underlying mathematical mechanism: 
It derives from the fact that evolution operators form an irreducible quantum Markov semigroup (QMS) in the presence of local dissipation. As a result, time-evolved trajectories with an identical measurement record, but with different initial states approach each other at sufficiently long times.
If the  system further possesses a transitive symmetry, e.g.~spin-permutation symmetry, we then show that this leads to an identical evolution of the spin-components in the single trajectories. The absence of local dissipation leads to an incomplete local spin algebra and hence only partially synchronized dynamics. Breaking the spin-permutation symmetry, e.g.~by adding local disorder in the Hamiltonian, destroys the phenomenon. The algebraic origin of the effect is demonstrated to be general, arising e.g.~also in interacting models and for larger spins as long as the QMS irreducibility and the symmetry conditions are met. Furthermore, this synchronization is independent of the measurement basis of the ancilla used in the weak-measurement construction.

\smallskip

\emph{Minimal model---}We first consider a simple model of an ensemble of $N$ non-interacting spin-1/2 particles, with Hamiltonian $\hat{H}=\Omega \hat{S}^x+\Delta \hat{S}^z$. Here, $\hat S^\alpha=\frac12\sum_{k=1}^N \hat\sigma^\alpha_k$ are collective spin-operators with corresponding  Pauli matrices $\hat{\sigma}_k^\alpha$ ($\alpha=x,y,z$). The parameters $\Omega$ and $\Delta$ are real.
As illustrated in Fig.~\ref{Fig1}(a), we start with a random product state of the spins, subject each spin to local dissipation (spontaneous emission), and switch on a continuous weak measurement of a collective spin component, here $\hat S^z$. 
\begin{figure}[tb]
    \begin{centering}
	\includegraphics[width=\linewidth]{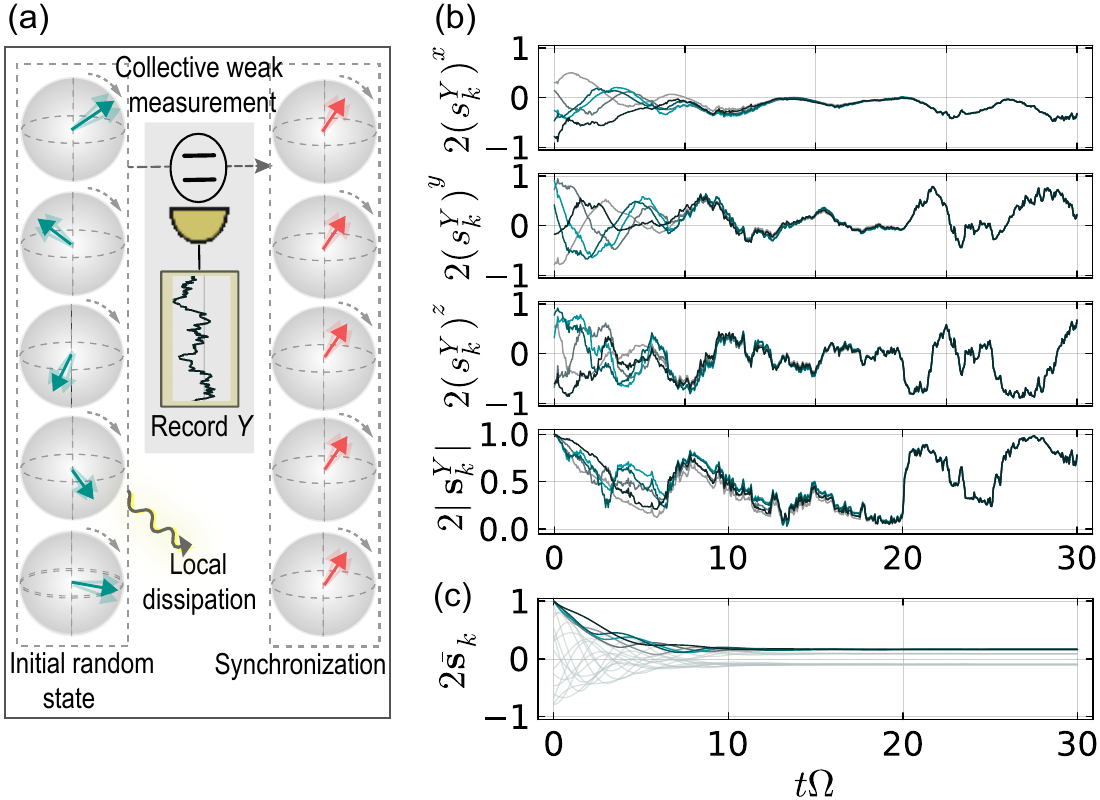}\\
    \par\end{centering}
    \caption{(a) Schematic illustration: $N$ non-interacting spin particles undergo collective continuous weak measurement, i.e.~they are jointly coupled to an ancilla, which is continuously  measured projectively. Each spin is subjected to local dissipation. Within each single trajectory all spins exhibit identical stochastic evolution at late times.
    (b) Example of the effect for $N=5$ driven non-interacting spin-$1/2$ particles that are evolving under Eq.~\eqref{eq:SME} with measurement record $Y$. Shown are results for the  single-trajectory Bloch vector components, $\bm{s}_k^Y$ for $k=1,\dots,5$ (solid lines in different colors), given a certain measurement record $Y$. Bloch-vector lengths remain finite and fluctuating.
    (c) Ensemble-averaged dynamics computed from the full density matrix, $\overline{\bm{s}}_k$. Solid green lines show Bloch-vector lengths, while thin gray curves show components $\overline{s}_k^x,\,\overline{s}_k^y,\,\overline{s}_k^z$.
    Parameters: \(N=5\), \(\Delta=0.5\Omega\), \(\gamma=0.1\Omega\), \(\Gamma=0.8\Omega\),  \(\phi=0\). }
    \label{Fig1}
\end{figure}
The measurements do not resolve individual spins, but preserve the spin-permutation symmetry, and generate a common stochastic backaction. Conditioning on a collective measurement record $Y$, we describe the dynamics by a stochastic master equation~\cite{WisemanMilburn2009,Barchielli2009,Jacobs2014} ($\hbar\equiv1$)
\begin{align}
d\hat\rho_c
=
&-\mi[\hat H,\hat\rho_c]dt+ \gamma\sum_k \mathcal{D}[\hat\sigma^-_k]\hat\rho_c dt\nonumber\\
&+\Gamma\mathcal{D}[\hat S^z]\hat\rho_c dt
+
\sqrt{\Gamma}\mathcal{H}[\me^{-\mi\phi}\hat S^z]\hat\rho_c dW_t. 
\label{eq:SME}
\end{align}
We define the super-operators $\mathcal D[\hat L]\hat\rho=\hat L\hat\rho \hat L^\dagger -\frac{1}{2}\{\hat L^\dagger \hat L,\hat\rho\}$, and $\mathcal H[\hat O]\hat\rho = \hat O\hat\rho+\hat\rho\hat O^\dagger -\tr[(\hat O+\hat O^\dagger)\hat\rho]\hat\rho$, and $dW_t$ is a Wiener increment satisfying $\mathbb E[dW_t]=0$ and $ dW_t^2= dt$. The local Lindblad operator $\hat\sigma^-_k$ governs spontaneous emission with homogeneous rate $\gamma$. 
The last term of Eq.~(\ref{eq:SME}) represents the backaction of the weak collective measurement, which features a strength $\Gamma$ and a homodyne phase $\phi$.
The corresponding homodyne record is determined by the equation
\begin{align}
dY_t=2\sqrt\Gamma\cos\phi \tr(\hat S^z\hat\rho_c)dt+dW_t.
\label{eq:measurement_record}
\end{align}
As ``trajectory'' from now on we denote a state evolution that corresponds to a single measurement record $Y$. Throughout this work, we set $\phi=0$, which amounts to measuring the informational homodyne quadrature of $\hat S^z$. Other choices do not qualitatively change our results.

Equation~\eqref{eq:SME} describes a weak measurement, where information on the collective spin variable $\hat S_z$ is inferred indirectly, by first entangling the system to an ancilla while continuously performing projective measurements on the ancilla. Generally, dynamics over an infinitesimal time-step can be described in the Kraus operator language~\cite{Nielsen2010}. There, the density matrix is updated according to the quantum operation $\hat \rho(t+dt) = \mathcal{E}[\hat \rho({t})] = \sum_k \hat K_m \hat \rho(t) \hat K_m^\dag$ with Kraus operators $\hat K_m$ fulfilling $\sum_m \hat K_m^\dag \hat K_m =\id$. Writing the operator sum in a re-normalized form, $\mathcal{E}[\hat \rho({t})] = \sum_k p_m \hat K_m \hat \rho(t) \hat K_m^\dag/p_m$ with probabilities $\sum_m p_m = 1$, the stochastic evolution for a conditioned density matrix becomes evident. Specifically, one can select and apply a random Kraus operator $\hat K_m$, followed by renormalization. The selection corresponds to the outcome of the projective ancilla measurement. Specifically (see~\cite{SM}), when choosing a qubit ancilla, a system-ancilla coupling $\propto \hat S_z \otimes \hat \sigma^y$, and measuring the ancilla in the eigenbasis of $\hat s^x$, leads to the two Kraus operators $\sqrt{2} \hat K^h_{0(1)} = \id \pm \sqrt{dt\Gamma} \hat S_z - dt \Gamma \hat S_z^2 /2$. In~\cite{SM} we show that those are indeed identical to the balanced homodyne scheme of~Eq.~\eqref{eq:SME}. However, equivalently the ancilla could be measured in the eigenbasis of $\hat \sigma_z$  (``number basis''), which then leads to $\hat K^n_{0} = \id - dt\Gamma \hat S_z^2/2$ and $\hat K^n_1 = \sqrt{dt \Gamma} \hat S_z$. Any basis choice leads to different Kraus operators that on the density matrix level give rise to identical evolution. However, on the trajectory level dynamics can strongly differ~\cite{Vovk2022, Daraban2025, Cichy2025, Rosario2025}. Importantly, the general algebraic arguments that we describe below are independent of this choice. Therefore the effect remains, regardless of the chosen unraveling.

We focus on single-trajectory dynamics generated by Eq.~(\ref{eq:SME}), where only the measurement record $Y$ is retained, while the local dissipation events remains unresolved and are hence traced over. 
For each trajectory, $\hat \rho_c^Y(t)$, we characterize the evolution of spin $k$ by its Bloch vector $\bm{s}^Y_k(t)=\tr[\hat\rho_c(t)\hat{\boldsymbol{\sigma}}_k]/2$, with $\hat{\boldsymbol{\sigma}}_k$ the vector of Pauli matrices.
We initialize the system in a product state, with random initial spin directions, $\bm{s}^Y_i(0)\neq\bm{s}^Y_j(0)$ for $i\neq j$. 
A typical single-trajectory evolution of the Bloch vectors $\mathbf{s}_k(t)$ is shown in Fig.~\ref{Fig1}(b), displaying  a gradual merging of Bloch-vector directions and lengths.
This is in contrast to the ensemble-averaged dynamics shown in  Fig.~\ref{Fig1}(c). There we show the ensemble-averaged  evolution computed from the full density matrix $\hat\rho$ via $\overline{\bm{s}}_k = \tr{[\hat \rho(t) \hat{\bm{\sigma}}_k]}/2$, and we observe that the system relaxes to a unique steady state with a given spin direction. These simple numerical results showcase that the unique and permutation-symmetric steady-state density matrix can be written as a statistical mixture of fluctuating trajectories. The weak measurements do not distinguish individual spins, therefore they lead to random uniform kicks of all spins into identical directions, which gives rise to the synchronized trajectory dynamics. Our key finding is that this effect can be understood from a general algebraic origin.

\smallskip

\emph{Algebraic origin---}Over a certain time $t>0$, the full Markovian density matrix evolution is governed by an evolution operator of the form $ \mathcal T_t=\me^{t\mathcal L}$ with \(\mathcal L\) the Lindblad generator. 
Here, \({\mathcal T}_{t\geq0}\) drives completely positive and trace-preserving evolution of the density matrix. The evolution operators form the quantum Markov semigroup (QMS), $\mathcal T_{t+s}=\mathcal T_t\mathcal T_s$.
Importantly, in our Eq.~\eqref{eq:SME}, the local dissipative terms allow to evolve the system into states on the full Hilbert space $\mathcal H$. The Lindblad operators $\hat\sigma_k^-$, together with the local coherent Hamiltonian, generate the von Neumann algebra $\mathcal B(\mathbb C^2)$ for each spin. Taking the tensor product over all spins yields the full algebra $\bigotimes_{k=1}^N \mathcal B(\mathbb C^2) = \mathcal B(\mathcal H)$. This makes the Lindbladian-generated QMS irreducible~\cite{Fagnola2026,Evans1977}. In other words, this implies that no nontrivial invariant subspace remains. A consequence of this is that there exists a unique steady state. To see this mathematically, one can show~\cite{SM} that for any two initial states \(\hat\rho_0\) and \(\hat\rho'_0\), the following inequality holds at any time $t$:
\begin{align}
\left\|\mathcal T_t(\hat\rho_0)-\mathcal T_t(\hat \rho_0^\prime)\right\|_1
\leq
C e^{-\lambda t}.
\label{eq:QMS_contraction}
\end{align}
Here, the coefficients \(C>0\) and \(\lambda>0\) are independent of the initial state, and \(\|\cdot\|_1\) denotes the trace norm. Consistent with Eq.~(\ref{eq:QMS_contraction}), Fig.~\ref{Fig1}(c) displays the evolution into the unique steady state on the density matrix level in our example model.

In addition, we show that a similar dynamic contraction also persists at the single-trajectory level (see~\cite{vanHandel2009,Amini2021,Amini2026} for other related work).
Specifically, consider the evolution of two trajectory states with an identical measurement record $Y$, but with generally different initial conditions \(\hat\rho_0\) and \(\hat\rho'_0\). In~\cite{SM}, we show that for two states with two different initial conditions, $\hat\rho_c^Y(t;\hat\rho_0)$ and $\hat\rho_c^Y(t;\hat\rho'_0)$, at any time $t$, 
\begin{align}
\norm{\hat\rho_c^Y(t;\hat\rho_0)-\hat\rho_c^{Y}(t;\hat\rho_0^\prime)}_1
\leq C^{Y} e^{-\lambda_c t}.
\label{eq:Contraction}
\end{align}
We note that $C^{Y}>0$ is trajectory dependent, and that the coefficients $C^{Y}$ and $\lambda_c$ need not take the same values as their counterparts in Eq.~\eqref{eq:QMS_contraction}.
In practical terms, Eq.~\eqref{eq:Contraction} implies that all trajectories become identical \emph{independent of the initial condition} if they exhibit the same measurement record.

Synchronization between spins in a single trajectory then follows as a consequence of the  spin-permutation symmetry. We define a swap operator exchanging spin $i$ and $j$,  \(\hat P_{ij}\), and the super-operator $\mathcal P_{ij}[\hat X]\equiv\hat P_{ij}\hat X\hat P_{ij}^{\dagger}$, such that e.g.~$\mathcal P_{ij}[\hat{\bm{\sigma}}_i]=\hat{\boldsymbol{\sigma}}_j$. Assuming that the stochastic evolution of the trajectory is governed by a permutation-symmetric equation, as for our model in Eq.~\eqref{eq:SME}, we obtain the equation $\hat\rho_c^Y \left(t;\mathcal P_{ij}[\hat\rho_0]\right)=\mathcal P_{ij}\left[\hat\rho_c^Y(t; \hat\rho_0)\right]$ for arbitrary $i$ and $j$. Defining a different initial condition by only swapping two spins $i$ and $j$, $\hat\rho_0^\prime=\mathcal P_{ij}[\hat\rho_0]$, and substituting this in Eq.~\eqref{eq:Contraction}, we arrive at an inequality for the norm of the difference of two Bloch vectors,
\begin{align}
\norm{\mathbf{s}^Y_i(t)-\mathbf{s}^Y_j(t)}\leq
\frac{\sqrt3}{2}\norm{\hat\rho_c^Y(t;\hat\rho_0)-
\mathcal P_{ij}[\hat\rho_c^{Y}(t;\hat\rho_0)]}_1\xrightarrow{t \to \infty} 0.
\label{eq:complete_sync}
\end{align}
This implies that in each trajectory any two pairs of spins evolve identically at sufficiently long times.

\emph{Synchronization dynamics---}
To analyze the dynamics we define two synchronization measures,
\begin{align}
R_d(t)\!=\!\frac{2}{N(N\!-\!1)}\sum_{i<j}\frac{\mathbf s_i(t)\cdot\mathbf s_j(t)}{|\mathbf s_i(t)||\mathbf s_j(t)|},\, R_a(t)\!=\!\frac{(\sum_{k}|\mathbf s_k(t)|)^2}{N\sum_k |\mathbf s_k(t)|^2},
\end{align}
which characterize direction and amplitude synchronization, respectively. Both quantities approach unity as the local spin vectors fully synchronize. Based on this, we define a synchronization error $D_R(t)=1-R_d(t)R_a(t)$, and  its asymptotic exponential decay as synchronization rate,
\begin{align}
\lambda_R=-\lim_{t\rightarrow\infty}
\frac{1}{2t}\ln \frac{D_R(t)}{D_R(0)}.
\end{align}
Note that the factor of $2$ is added since $D_R(t)$ is quadratic in the local spin mismatch.

Our goal is to compare the synchronization rates to the dynamics into the steady state on the density matrix level. At the level of the unconditional master equation, for the minimal model we can find a closed set of equations for the difference vector components, $(\mathbf{d_{ij}})^{x,y,z} = \overline{s}_i^{x,y,z} - \overline{s}_j^{x,y,z}$, $d \mathbf{d}_{ij}/dt = \mathbf{M} \cdot \mathbf{d}_{ij}$. 
The asymptotic relaxation dynamics of the local-spin difference is then dominated by the eigenvalue of $\mathbf{M}$ with the (negative) real part of smallest magnitude, which we denote as $\lambda_s$. Perturbatively, under the condition $(\Gamma-\gamma) \ll 2\sqrt{\Omega^2 + \Delta^2}$, relevant to the parameter regime of Fig.~\ref{Fig1}(b), one finds a linear increase of this eigenvalue with $\Gamma$, $\lambda_s \simeq \gamma + f(\Omega,\Delta) (\Gamma-\gamma)$.
For very large $\Gamma \gg \Delta, \gamma$, instead one finds $\lambda_s \simeq \gamma+{2\Omega^2}/{(\Gamma-\gamma)}$. Here, the second term is suppressed as $\Gamma$ increases, so that $\lambda_s$ approaches $\gamma$, consistent with a quantum Zeno-like dynamics.

\begin{figure}[tb]
    \centering
    \includegraphics[width=0.98\linewidth]{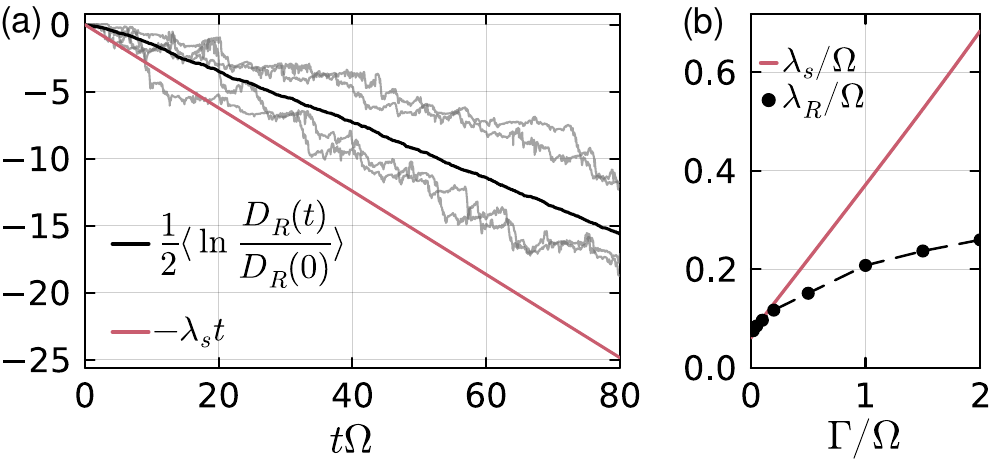}
    \caption{Dynamics of the synchronization error. (a) Gray curves: $\ln({D_R(t)}/{D_R(0)})/2$ for individual  trajectories. Black line: average over $100$ trajectories.
    At long times, both curves approach an approximately exponential decay rate $\lambda_R$. Red line: asymptotic rate from the density evolution, $\lambda_s$. 
    (b) Comparison of $\lambda_R$ and $\lambda_s$ as function of the measuring rate $\Gamma$.
    Other parameters are identical to Fig.~\ref{Fig1}(b).}
    \label{Fig2}
\end{figure}

Fig.~\ref{Fig2}(a) shows the decay of $D_R(t)$, away from the Zeno regime. Gray curves correspond to a few individual trajectories, the black line to the trajectory average. At long times, $D_R(t)$ approaches an exponential decay with corresponding synchronization rate $\lambda_R$, which we can extract with a fit. 
The red line, $-\lambda_s t$, shows the deterministic relaxation from the density matrix evolution.
Fig.~\ref{Fig2}(b) compares the two rates as a function of $\Gamma$. 
In the limit $\Gamma\to0$, the stochastic measurement term vanishes and $\lambda_R$ approaches the deterministic relaxation rate $\lambda_s$. The stochastic term scales in Eq.~\eqref{eq:SME} as $\sqrt{\Gamma}$, while the corresponding It\^o correction to the decay rate scales as $\Gamma$. Therefore, the two rates separate linearly near $\Gamma=0$. Numerically, we find that, away from the Zeno regime, the trajectory synchronization is always slower than the unconditional master equation counterpart, i.e.~$\lambda_R<\lambda_s$.

\emph{Partial synchronization---}We now analyze the fate of the synchronization if one of the two mathematical conditions is compromised. We first consider the case without local dissipation (\(\gamma=0\) in the minimal model), whereby $\mathcal{B}(\mathbb C^2)$ is no longer generated for each spin. As a result, Eq.~(\ref{eq:Contraction}) no longer holds. 
Still, we find that a partial form of synchronized trajectory dynamics remains. The Hilbert space can be resolved into sectors with different total spin, $\mathcal H=(\mathbb C^2)^{\otimes N} = \bigoplus_S \mathcal V_S\otimes\mathcal M_S$, where $\mathcal V_S$ is the irreducible spin-$S$ representation space, and $\mathcal M_S$ is the multiplicity space counting the number of equivalent copies of the spin-$S$ representation. Under this decomposition, the collective spin operators take the form $\hat S^\alpha = \bigoplus_S \hat S_S^\alpha\otimes \id_{\mathcal M_S}$, where $\id_{\mathcal M_S}$ is the identity operator on $\mathcal M_S$. In the absence of local dissipation, the Lindbladian in Eq.~(\ref{eq:SME}) acts only on the sectors $\mathcal V_S$. Defining a projector, $\hat\Pi_S$, onto the subspace $\mathcal V_S\otimes\mathcal M_S$, the single-trajectory weights $w_S^Y(t)=\tr[\hat\Pi_S \hat\rho_c^Y(t)]$ of the different sectors are independently updated by the measurement backaction.

\begin{figure}[t]
    \begin{centering}
    \includegraphics[width=0.93\linewidth]{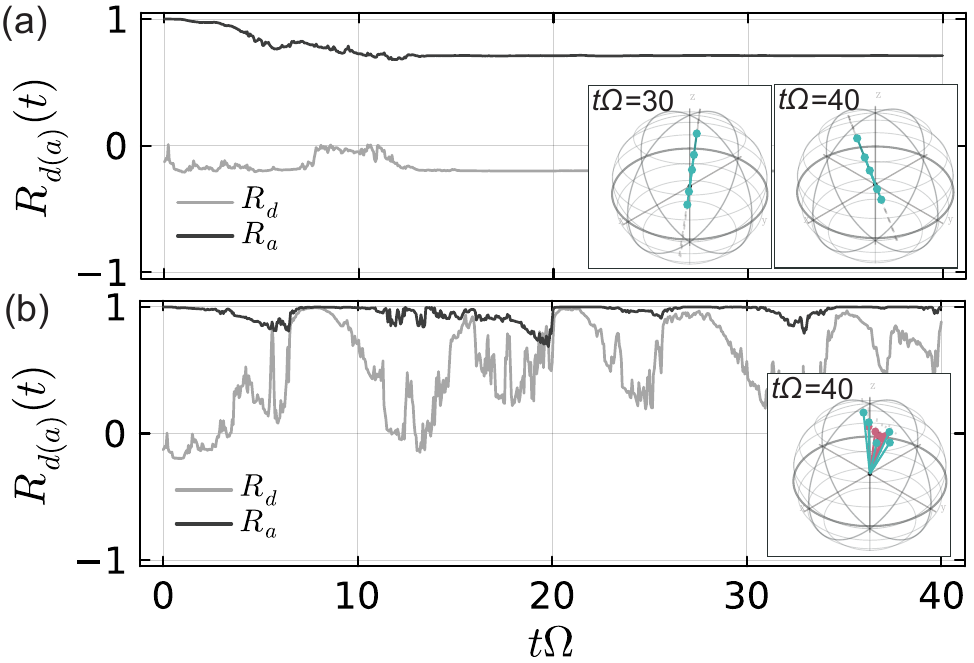}
    \par\end{centering}
\caption{Time evolution of $R_d$ and $R_a$ for: (a) dynamics without local dissipation ($\gamma=0$), and (b) in the presence of local disorder. The insets show snapshots of local Bloch-sphere distributions, small dashed lines indicate radial projections onto the Bloch-sphere surface. In (a), a partial collinear synchronization effect can be observed. In (b), synchronization breaks down. Disorder is added in detunings and Rabi frequencies, \(\Delta_k=\Delta+\delta\Delta_k\) and \(\Omega_k=\Omega(1+\delta\Omega_k)\), with \(\delta\Delta_k\) and \(\delta\Omega_k\)  Gaussian random variables with zero sample mean and disorder strengths \(W_\Delta=W_\Omega=0.3\) (green dots), \(W_\Delta=W_\Omega=0.1\) (almost aligned red dots).
 Other parameters are identical to Fig.~\ref{Fig1}(b).
}
    \label{Fig3}
\end{figure}

Through a mechanism known as dissipative freezing~\cite{Benoist2014,SanchezMunoz2019}, the system always evolves into a single total-spin sector, also for  initial states having support on multiple sectors. 
Denoting this spin sector as $S_\ast$, the QMS generated by the Lindbladian is irreducible on $\mathcal V_{S_\ast}$, and the dynamic contraction therefore survives in this subspace. As argued above, spin-permutation symmetry locks the spins onto a common stochastic direction, $\mathbf{v}^Y(t)$. Now, however, the amplitudes of each spin $k$, \(a_k^Y\), depend on the unresolved degrees of freedom in the multiplicity space \(\mathcal M_{S_*}\), and therefore will depend on $k$, which overall leads to the time-dependent spins evolving as (see~\cite{SM} for details):
\begin{align}
\mathbf{s}^Y_k(t)\xrightarrow{t \to \infty} a_k^Y\mathbf v^Y(t).
\label{eq:collinear}
\end{align}
At late times, the Bloch vectors of the spins can be parallel or antiparallel,
but their lengths do not need to be identical, only their ratios become fixed asymptotically.

An example of such a ``collinear configuration'' is  shown in Fig.~\ref{Fig3}(a), where the initial product state features \(\langle \hat{\mathbf S}^2\rangle\approx 3.1\) but the system evolves into the $S_\ast=1/2$ sector at long times. As a signature of the partial synchronization, here our synchronization measures $R_{d(a)}(t)$ evolve to a constant values different from one.
We note that, in a special case of \(S_*=N/2\), the multiplicity space is one-dimensional ($\mathcal M_{N/2}\simeq\mathbb C$), the long-time amplitudes $a_k$ become $k$-independent. In this scenario, full synchronization is recovered.

\emph{Broken symmetry---}By introducing local disorder we can fully destroy the synchronization effect. While dynamic contraction still exist, it now occurs on different time scales for different spins, since $\hat\rho_c^Y \left(t;\mathcal P_{ij}[\hat\rho_0]\right) \neq\mathcal P_{ij}\left[\hat\rho_c^Y(t;\hat\rho_0)\right]$. Hence, synchronization becomes gradually dispersive with increasing disorder, as illustrated in Fig.~\ref{Fig3}(b). There, we show that both $R_d$ and $R_a$ deviate from unity and fluctuate in time. Bloch vector misalignment becomes more pronounced when increasing the disorder strength (inset). We note that when the permutation symmetry is only partially broken, synchronization survives only among the spins with intact permutation symmetry.
Specifically, if $\hat\rho_c^Y \left(t;\mathcal P_{ij}[\hat\rho_0]\right) = \mathcal P_{ij}\left[\hat\rho_c^Y(t;\hat\rho_0)\right]$ holds for a given pair of $i$ and $j$, dynamics of the $i$-th and $j$-th spins are still fully synchronized.

\emph{Generality of the effect---}The synchronization effect studied here, arises from a purely algebraic perspective. What matters is the generation of the full operator algebra on the relevant Hilbert space, so that the full many-body Hilbert space is accessible to the dynamics. Given this, the effect can also be achieved for larger spins. For instance, synchronization of a spin-$1$ ensemble can be achieved when the local Hamiltonian and dissipation conspire to generate the full algebra $\bigotimes_{k=1}^N\mathcal B(\mathbb C^3)$.

Furthermore, while so far we have focused on a diffusive homodyne unraveling in Eq.~(\ref{eq:SME}), other choices of the ancilla measurement basis/Kraus operators also results in synchronized dynamics. 
For instance, we can also consider the number-resolved unraveling corresponding to Kraus operators $\hat K^n_{0(1)}$, one can again map the dynamics to a SDE~\cite{SM}. The binary measurement increment $dN_t\in\{0,1\}$ satisfies $\mathbb E[dN_t|\hat\rho_c] = \Gamma \tr[\hat O^\dagger\hat O\hat\rho_c] dt$. The last term in Eq.~\eqref{eq:SME} is then replaced by
\begin{align}
\mathcal J[\hat O]\hat\rho_c \left( dN_t - \Gamma \tr[\hat O^\dagger\hat O\hat\rho_c] dt\right),
\label{eq:dN}
\end{align}
where $\mathcal J[\hat O]\hat\rho = \frac{\hat O\hat\rho\hat O^\dagger}{\tr[\hat O^\dagger\hat O\hat\rho]}-\hat\rho$.
Also for this measurement scheme, the algebraic arguments remain the same and the conditional dynamics exhibits full synchronization~\cite{SM}.

\begin{figure}[htbp]
    \begin{centering}
    \includegraphics[width=0.9\linewidth]{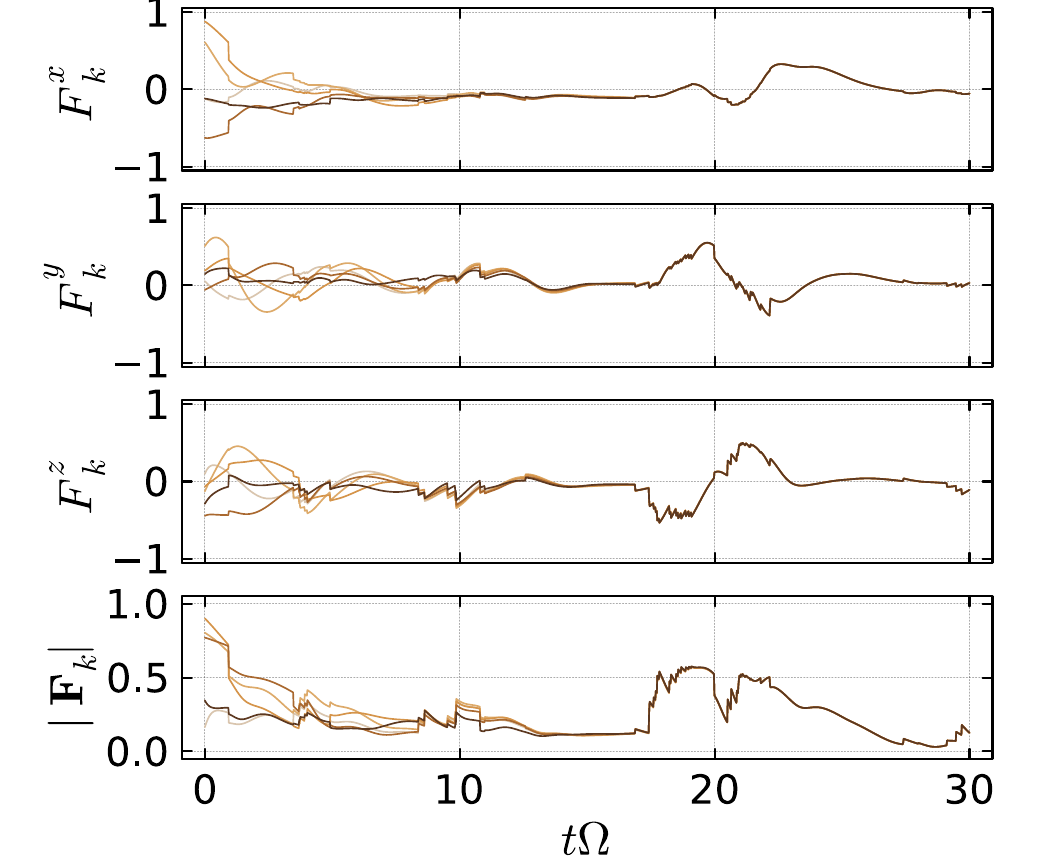}
    \par\end{centering}
    \caption{Synchronization in a dissipative ensemble of interacting spin-\(1\) particles. The system is initialized in a generic product state. Under the quantum-jump conditional dynamics, the trajectories of polarization vectors $\mathbf{F}_k(t)=\tr[\hat\rho_c(t)\hat{\boldsymbol{F}}_k]=\bigl(F_k^x(t),F_k^y(t),F_k^z(t)\bigr)$, shown as solid lines, progressively converge and eventually overlap, for the same collective measurement record. 
    Parameters: \(N=5\), \(\Delta=0.5\Omega\), \(J=0.5\Omega\), \(\gamma_{+0}=\gamma_{0-}=0.1\Omega\), \(\Gamma=0.5\Omega\). }
    \label{Fig4}
\end{figure}

Lastly, the condition of spin-permutation symmetry can be relaxed.
Specifically, we consider the general setting of an ensemble of quantum objects indexed by $k=1,\ldots,N$.
Let $G\subseteq S_N$ ($S_N$: full permutation group of the $N$ objects) denote the subgroup of permutations that are symmetries of the system (dubbed the transitive permutation subgroup). Let each element $g\in G$ be represented by a unitary operator $\hat{U}_{g}$, then the coherent Hamiltonian $\hat H$, the monitored collective observable $\hat O$, and the local quantum jump operators ${\hat L_k}$ should satisfy $\hat{U}_{g} \hat H \hat{U}_{g}^\dagger = \hat H$, $\hat{U}_{g} \hat O \hat{U}_{g}^\dagger = \hat O$, and $\hat{U}_{g} \hat L_k \hat{U}_{g}^\dagger = \hat L_{g(k)}$. 
For the complete synchronization of the  $N$ objects to occur, we require the existence of a permutation $g \in G$, such that $g(i) =j$ for arbitrary indices $i$ and $j$. 
In this light, the full spin-permutation symmetry that we introduced through the minimum model is only a special case and stronger than necessary. 

In order to illustrate the synchronization effect under all three generalizations, we consider a spin-$1$ chain with the Hamiltonian $\hat H = \sum_{k}
\left( \Omega \hat F_k^x
+ \Delta \hat F_k^z \right)
+ J \sum_{k} \left( \hat F_k^x \hat F_{k+1}^x
+ \hat F_k^y \hat F_{k+1}^y \right)$, subject to the periodic boundary condition $\hat F_{N+1}^\alpha\equiv\hat F_1^\alpha$. Here $\hat F_k^\alpha$ are the standard spin-$1$ matrices on the local basis $\{\ket{+}_k,\ket{0}_k,\ket{-}_k\}$~\cite{SM}.
The number-resolved detection in the output channel is associated with the collective jump operator $\hat O=\sum_{k=1}^{N}\hat F_k^z$ with measurement rate $\Gamma$. 
We also include local dissipative processes given by $\hat L_{k,+0}=\ket{0}_k\bra{+}$ and $\hat L_{k,0-}=\ket{-}_k\bra{0}$, with the corresponding rates $\gamma_{+0}$ and $\gamma_{0-}$. Symmetry-wise, the system and its dynamics are invariant under the cyclic translation that maps $(1,2,\cdots,N)$ to $(2,\cdots,N,1)$.
Figure~\ref{Fig4} demonstrates the single-trajectory synchronization for this generalized model.

\emph{Conclusion \& Outlook---}We identified a measurement-induced synchronization effect at the single-trajectory level in collectively monitored quantum many-body systems. Physically, the collective measurement backaction is homogeneous, which implies that the steady-state density matrix can be unraveled into trajectories that fluctuate in a synchronized fashion. We have mathematically demonstrated the generality of the effect by reducing it to two conditions: i) a Lindbladian-generated irreducible QMS that generates dynamics on the full Hilbert space (which can e.g.~require the presence of local Lindblad channels); ii) transitive symmetry, e.g.~permutation symmetry. The effect can thus be engineered in a general class of many-body systems. We analyzed its robustness and the synchronization dynamics in several examples. It would be interesting to observe or utilize the effect experimentally. Here, the fact that the phenomenon is not only present in homodyne, but also in number-unravelings may lower requirements on collecting post-selection statistics. Furthermore, it would be interesting to explore, from an algebraic perspective, other forms of dynamical order at the single-trajectory level, or analyze ways to stabilize dynamical order using feedback. Such questions may be relevant to experimental platforms that offer a combination of collective and local monitoring, such as in Tavis-Cummings model realizations within superconducting circuit-QED \cite{Fink2009,Lalumiere2010,Filipp2009}, or with optically trapped cold atoms in cavities~\cite{picot2026extended, ye2026controlling, grinkemeyer2025error,hartung2024a}. In particular the latter have already demonstrated weak measurement implementations of the collective spin variables via dispersive atom-cavity couplings as in our minimal model~\cite{cox2016deterministic}.

{
\renewcommand{\addcontentsline}[3]{}
\begin{acknowledgments}
\emph{Acknowledgments---}We thank Yuto Ashida, Haowei Li, Arghavan Safavi-Naini for helpful discussions, and Junhui Qin for valuable mathematical input. This work was supported by the ERC Consolidator project MATHLOCCA (Grant nr.~101170485).  WY acknowledges support from the National Natural Science Foundation of China (Grants No. 12374479).
\end{acknowledgments}
}

{
\renewcommand{\addcontentsline}[3]{}
\bibliographystyle{apsrev4-2}
\bibliography{main}
}
\clearpage
\onecolumngrid
\renewcommand{\thefigure}{S\arabic{figure}}
\setcounter{figure}{0}
\pagenumbering{arabic}
\setcounter{page}{1}
\renewcommand{\theequation}{S.\arabic{equation}}
\setcounter{equation}{0}
\setcounter{section}{0}
\setcounter{tocdepth}{2}
\renewcommand{\thesection}{SM\arabic{section}}

\begin{center}
    \textbf{\Large{\textit{Supplemental Material for} \\ \smallskip
        ``Synchronized Spin Trajectories under Collective Weak Measurements''}}\\
    \hfill \break
    \smallskip
\end{center}

In this Supplemental Material, we provide additional details on the results stated in the main text. 
In SM1, we provide some mathematical background of QMS. In SM2, we discuss the weak-measurement construction by Kraus operator language, and connect it to the main text. 
In SM3, we give a proof of the stochastic dynamical contraction, and show that the synchronization effect is independent of the measurement basis. 
In SM4, we take some examples and show the measurement-induced  phenomenon arises in a broad class of many-body systems. 
Finally, in SM5, we provide further details on the partial collinear synchronization.
\tableofcontents
\vspace{1em}

\section{Quantum Markov semigroups: irreducibility and contraction}\label{section:SM1}
In this section, we provide supplementary materials for quantum Markov semigroups (QMS).
Briefly, a QMS describes the dynamic of a Markovian quantum open system (for example, a system determined by Lindblad master equations).
Here we consider a finite-dimensional Hilbert space $\mathcal{H}$.
Let $\mathcal{B}(\mathcal{H})$ be the matrix algebra over $\mathcal{H}$,
which is viewed as a $C^*$-algebra consisting of bounded linear operators on $\mathcal{H}$ equipped with the operator norm.

A QMS collects a continuous family $\{\mathcal{T}_t\}_{t\ge 0}$ of bounded maps on $\mathcal{B}(\mathcal{H})$ such that:
\begin{enumerate}
    \item $\mathcal{T}_0=\operatorname{id}$ the identity map on $\mathcal{B}(\mathcal{H})$.
    \item $\mathcal{T}_t$ is unital, i.e. $\mathcal{T}_t(1)=1$ for any $t$.
    \item $\mathcal{T}_{t+s}=\mathcal{T}_t\circ \mathcal{T}_s$ for any $t,s\ge 0$.
    \item $\mathcal{T}_t$ is completely positive.
\end{enumerate}

In general, a QMS does not necessarily come from a Lindblad master equation.
However, when restricting in the finite dimensional case, such a QMS always has the following presentation in Lindbladian form:
\begin{align}
\mathcal{T}_t(x)=e^{\mathcal{L} t}(x),
\mathcal{L}(x)=\hat G^{\dagger}x+x\hat G+\sum_k \hat L_k^{\dagger} x\hat L_k,\quad x\in\mathcal{B}(\H)
\end{align}
For example, if we consider the QMS coming from a Lindbladian master equation, then the operator $\hat G$ is a linear combination of the Hamiltonian $\hat H$ and $\hat L_k^{\dagger}\hat L_k$.

The following theorem states the relation between the algebraic structure and the dynamic of the QMS.

\begin{theorem}
\label{positivity_improving}
The following conditions on a QMS are equivalent,
which we call by being irreducible:
	\begin{enumerate}
    \renewcommand{\labelenumi}{(\arabic{enumi})}
		\item $\hat G$ and $\hat L_k$ generate the algebra $\mathcal{B}(\mathcal{H})$.
		\item $\mathcal{T}$ is positivity improving,
        i.e. if an element $x\in \mathcal{B}(\mathcal{H})$ satisfies $x\ge0$ and $x\neq 0$,
        then we have $\mathcal{T}_t(x)>0$.
	\end{enumerate}
\end{theorem}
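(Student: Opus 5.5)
The plan is to pass through the standard characterization of irreducibility in terms of invariant subspaces and the Dyson expansion of $\mathcal T_t$ around the ``no-jump'' semigroup. In Heisenberg picture, write $\mathcal L=\mathcal L_0+\mathcal J$, where $\mathcal L_0(x)=\hat G^\dagger x+x\hat G$ and $\mathcal J(x)=\sum_k \hat L_k^\dagger x\hat L_k$. Then $e^{t\mathcal L_0}(x)=\hat V_t^\dagger x\hat V_t$ with $\hat V_t=e^{t\hat G}$ invertible, and
\begin{align}
\mathcal T_t(x)=\hat V_t^\dagger x \hat V_t+\sum_{n\ge1}\int_{0\le s_1\le\dots\le s_n\le t}\sum_{k_1,\dots,k_n}\hat W^\dagger x\,\hat W\,ds_1\cdots ds_n ,
\end{align}
where $\hat W=\hat L_{k_1}\hat V_{s_1}\hat L_{k_2}\cdots \hat L_{k_n}\hat V_{t-s_n}$, up to ordering conventions. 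Every term is of the form $\hat W^\dagger x\hat W$, so it is positive when $x\ge0$. This reduces the question to vectors: $\mathcal T_t(x)>0$ fails iff there is a unit $\psi$ with $\langle \psi,\mathcal T_t(x)\psi\rangle=0$. Writing $x\ge \epsilon |\phi\rangle\langle\phi|$ for some $\phi$ in the range of $x$, this forces $\langle \phi,\hat W\psi\rangle=0$ for every word $\hat W$ and almost every (hence, by analyticity, every) choice of times.

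\textbf{(1)$\Rightarrow$(2).} Suppose $\mathcal T_t(x)$ has a kernel vector $\psi\neq0$. Let $\mathcal K$ be the span of all vectors $\hat W\psi$. Differentiating in the times $s_j$ at $0$ shows that $\mathcal K$ contains all vectors obtained by applying arbitrary products of $\hat G$ and $\hat L_k$ to $\psi$. Thus $\mathcal K$ is a nonzero subspace invariant under the unital algebra generated by $\hat G$ and the $\hat L_k$. By (1) this algebra is $\mathcal B(\mathcal H)$, whose only nonzero invariant subspace is $\mathcal H$. So $\mathcal K=\mathcal H$, which contradicts $\mathcal K\perp\phi$. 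Hence $\mathcal T_t(x)>0$ for every $t>0$.

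\textbf{(2)$\Rightarrow$(1).} Let $\mathcal A$ be the algebra generated by $\hat G$ and the $\hat L_k$. If $\mathcal A\neq\mathcal B(\mathcal H)$, Burnside's theorem gives a proper nonzero subspace $\mathcal K$ invariant under $\mathcal A$. The main interpretive care lies in what ``generate'' means. It should mean the unital algebra, not the $*$-algebra. For a $*$-algebra the equivalence fails, for example for amplitude damping, so I would read it in the unital sense and remark on this.

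Take $\psi\in\mathcal K$ and $\phi\perp\mathcal K$. Every word $\hat W$ maps $\psi$ into $\mathcal K$, because $\hat V_s=e^{s\hat G}$ preserves $\mathcal K$. Hence $\langle\psi,\mathcal T_t(|\phi\rangle\langle\phi|)\psi\rangle=\sum\int|\langle\phi,\hat W\psi\rangle|^2=0$, so $\mathcal T_t$ is not positivity improving.

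The main obstacle is making the Dyson expansion and the ``time derivatives generate the algebra'' step rigorous. For this I would use that the integrand is analytic in the times $s_j$, so vanishing on a set of full measure implies identical vanishing. Taylor coefficients at $s_j=0$ then give all monomials $\hat L_{k_1}\hat G^{m_1}\cdots$. The $n=0$ term supplies the pure powers of $\hat G$, including the identity.
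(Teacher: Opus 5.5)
The paper does not prove this theorem; it only cites Fagnola et al. Your argument therefore has to stand on its own. Your direction (2)$\Rightarrow$(1) is correct. So is your insistence that ``generate'' must mean the non-$*$ unital algebra: amplitude damping is a genuine counterexample to the $*$-reading, and the main text's phrase ``generate the von Neumann algebra'' suggests exactly that reading.

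\textbf{The gap is in (1)$\Rightarrow$(2).} Analyticity makes the differentiation step rigorous, but differentiation does not give what you claim.

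\emph{Why the time derivatives fail.} At fixed $t$ the free-evolution times in each Dyson word add up to $t$, so you cannot set them all to zero. In interaction picture every word is
$\hat W=e^{t\hat G}\tilde L_{k_n}(s_n)\cdots\tilde L_{k_1}(s_1)$, with $\tilde L_k(s)=e^{-s\hat G}\hat L_k e^{s\hat G}$.
The $n=0$ term is the single number $\langle\phi,e^{t\hat G}\psi\rangle$. It does not give $\langle\phi,\hat G^m\psi\rangle=0$ for all $m$.

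\emph{What the argument actually yields.} Analytic continuation in the $s_j$ and Taylor coefficients at $s_j=0$ give only $e^{t\hat G^\dagger}\phi\perp\mathcal{A}_{\mathrm{int}}\psi$. Here $\mathcal{A}_{\mathrm{int}}$ is the unital algebra generated by the nested commutators $\mathrm{ad}_{\hat G}^m(\hat L_k)$, $m\ge 0$. Equivalently, your $\mathcal K$ equals $e^{t\hat G}\mathcal{A}_{\mathrm{int}}\psi$.
\begin{itemize}
\item This subspace is $\hat L_k$-invariant.
\item You have not shown it is $\hat G$-invariant, and $\mathcal{A}_{\mathrm{int}}$ need not contain $\hat G$.
\item Your argument does work verbatim for the resolvent $\int_0^\infty e^{-\lambda t}\mathcal T_t\,dt$, where the times are unconstrained. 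That result is weaker than positivity improvement at each $t$.
\end{itemize}

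\emph{The missing lemma.} You need $\mathrm{Alg}(\hat G,\hat L_k)=\mathcal{B}(\mathcal{H})\Rightarrow\mathcal{A}_{\mathrm{int}}=\mathcal{B}(\mathcal{H})$. One proof:
\begin{enumerate}
\item $\mathcal{A}_{\mathrm{int}}$ is $\mathrm{ad}_{\hat G}$-stable, and $\mathrm{Alg}(\hat G,\hat L_k)=\sum_m\mathcal{A}_{\mathrm{int}}\hat G^m$.
\item The radical $J$ of $\mathcal{A}_{\mathrm{int}}$ is preserved by the automorphisms $a\mapsto e^{s\hat G}ae^{-s\hat G}$, hence by $\mathrm{ad}_{\hat G}$.
\item So $\sum_m J\hat G^m$ is a nilpotent two-sided ideal of the simple algebra $\mathcal{B}(\mathcal{H})$, which forces $J=0$.
\item Derivations of a semisimple algebra are inner, so $\mathrm{ad}_{\hat G}=\mathrm{ad}_{\hat D}$ on $\mathcal{A}_{\mathrm{int}}$ for some $\hat D\in\mathcal{A}_{\mathrm{int}}$.
\item Then $\hat G-\hat D$ commutes with $\mathcal{A}_{\mathrm{int}}$ and with $\hat G$, so it is central in $\mathcal{B}(\mathcal{H})$. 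Hence it is a scalar, and $\hat G\in\mathcal{A}_{\mathrm{int}}$.
\end{enumerate}
With this lemma, $\mathcal{A}_{\mathrm{int}}\psi=\mathcal{H}$ forces $e^{t\hat G^\dagger}\phi=0$, hence $\phi=0$, which is the contradiction you need.
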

\begin{remark}
    The assumption of finite dimensionality is essential to produce the equivalence, as discussed in \cite{Fagnola2026}.
    They are equivalent to another notion of the irreducibility of a QMS in \cite{Fagnola2026},
    whence the notation;
    while the precise definition of the irreducibility is not so important for us.
\end{remark}

The importance of irreducibility is that an irreducible QMS has strong restrictions on its dynamics, which we present below.
We start by a general input.
\begin{theorem}[Birkhoff contraction theorem~\cite{Bushell1973}]\label{Birkhoff}
	Let $C$ be a closed, pointed, convex cone in a real Banach space, and let
	$T:C\to C$ be a positive linear operator. For $x,y\in C^\circ$ in the interior of $C$, we define
	Hilbert's projective metric~\cite{ReebKastoryanoWolf2011} by
	\begin{align}
	d_H(x,y)
	=
	\log \frac{M(x/y)}{m(x/y)},
	\end{align}
	where
	$ M(x/y) = \inf\{\zeta>0 : x\leq \zeta y\}$ and $m(x/y) = sup\{\eta>0 : \eta y\leq x\}$.
	The projective diameter of $T$ is defined as
	$\Delta(T)=\sup_{x,y\in C^\circ} d_H(Tx,Ty)$.
	Then for all $x,y\in C^\circ$,
	we have
	\begin{align}
	d_H(Tx,Ty)
	\leq
	\tanh\!\left(\frac{\Delta(T)}{4}\right)d_H(x,y).
	\end{align}
\end{theorem}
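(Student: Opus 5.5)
The proof will follow Birkhoff's original route. The core is the case of a perspective map between two line segments, after which the general case is reduced to it through the two-dimensional subcone spanned by $x$ and $y$.

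\emph{Reduction.} Fix $x,y\in C^\circ$ with $x,y$ not proportional; otherwise $d_H(x,y)=0$ and $Tx$, $Ty$ are also proportional. Set $m=m(x/y)$ and $M=M(x/y)$. Since $C$ is closed, the extremal constants are attained, so
\[
x-my\in C \quad\text{and}\quad My-x\in C,
\]
and both vectors are nonzero. Call them $u$ and $v$. Then $y=(u+v)/(M-m)$ and $x=(Mu+mv)/(M-m)$. Because $T$ is positive and linear, $Tx$ and $Ty$ are the same combinations of $Tu,Tv\in C$.

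\emph{One-dimensional estimate.} The plan is to bound $m(Tx/Ty)$ and $M(Tx/Ty)$ in terms of the ratio data of $Tu$ and $Tv$. Let $\mu=m(Tu/Tv)$ and $\nu=M(Tu/Tv)$, and assume first that $Tu,Tv\in C^\circ$. Then $d_H(Tu,Tv)=\log(\nu/\mu)\le\Delta(T)$. Here I use the standard fact that the diameter over the interior equals the diameter of $T(C\setminus\{0\})$. Proving this requires approximating $u,v$ by interior points $u+\epsilon y$ and $v+\epsilon y$, and using continuity of $d_H$ under such perturbations. I expect this approximation to be a technical nuisance.

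Since $\mu Tv\le Tu\le \nu Tv$, we get $(M\mu+m)\,Tv\le (M-m)\,Tx$, and similar order inequalities for $Ty$. The extremal ratios of $Tx$ to $Ty$ are then controlled by the real fractional-linear function
\[
f(s)=\frac{Ms+m}{s+1}
\]
on the interval $s\in[\mu,\nu]$. The plan is to show
\[
d_H(Tx,Ty)\le \log\frac{(M\nu+m)(\mu+1)}{(\nu+1)(M\mu+m)}.
\]
This inequality is the main obstacle. It is genuinely two-dimensional and needs a careful argument: writing $Tu=\alpha\,Tv+(\text{cone remainder})$ does not give a linear order, so I would test the defining inequalities $\eta\,Ty\le Tx\le \zeta\,Ty$ directly against $Tu-\mu Tv\ge0$ and $\nu Tv-Tu\ge0$.

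\emph{Maximization.} Put $a=\log(M/m)=d_H(x,y)$ and $b=\log(\nu/\mu)\le\Delta$. The right-hand side is a cross-ratio of the four points $0,\infty,\mu,\nu$ mapped by the Möbius map. It remains to maximize it over $\mu$ with $b$ fixed, and over $M/m$ with $a$ fixed.

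A calculus computation with the substitutions $s=e^{\sigma}$ and $M/m=e^{a}$ should give the supremum
\[
2\log\frac{\cosh\bigl((a+b)/4\bigr)}{\cosh\bigl((b-a)/4\bigr)}\cdot\frac12\cdots
\]
in the classical form. It is cleaner to bound the derivative in $a$ instead: the supremum over $\mu$ of
\[
\frac{d}{da}\,\log\frac{(e^{a}\nu+1)(\mu+1)}{(\nu+1)(e^{a}\mu+1)}
\]
is at most $\tanh(b/4)$, attained at $\mu\nu e^{a}=1$. Integrating in $a$ from $0$ to $d_H(x,y)$ then yields
\[
d_H(Tx,Ty)\le\tanh(b/4)\,d_H(x,y)\le\tanh(\Delta/4)\,d_H(x,y).
\]
When $\Delta=\infty$, the claimed bound is just $d_H(Tx,Ty)\le d_H(x,y)$, which follows directly from the monotonicity of $m$ and $M$ under a positive map.
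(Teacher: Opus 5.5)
The paper does not prove this theorem; it quotes it from Bushell. Your outline is the classical Birkhoff-type argument: reduce to the plane through $x$ and $y$, then do a one-variable cross-ratio estimate. It is correct in substance. The derivative-in-$a$ device gives the constant cleanly. Indeed $\frac{d}{da}\log\frac{(e^a\nu+1)(\mu+1)}{(\nu+1)(e^a\mu+1)}=\varphi(a+\log\nu)-\varphi(a+\log\mu)$ with $\varphi(t)=e^t/(1+e^t)$. Its supremum at fixed $b=\log(\nu/\mu)$ is $\varphi(b/2)-\varphi(-b/2)=\tanh(b/4)$, and the expression vanishes at $a=0$. Three points need tightening.

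First, the inequality you call the main obstacle is immediate, by exactly the test you propose. With $f(s)=(Ms+m)/(s+1)$,
\[
(M-m)\bigl(Tx-f(\mu)Ty\bigr)=\frac{M-m}{1+\mu}\,(Tu-\mu Tv),\qquad (M-m)\bigl(f(\nu)Ty-Tx\bigr)=\frac{M-m}{1+\nu}\,(\nu Tv-Tu).
\]
Hence $f(\mu)Ty\le Tx\le f(\nu)Ty$, so $d_H(Tx,Ty)\le\log[f(\nu)/f(\mu)]$, using nothing beyond $\mu Tv\le Tu\le\nu Tv$.

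Second, the ``standard fact'' you invoke is false as stated. Take a rank-one $T=\phi(\cdot)\,e$ whose positive functional $\phi$ vanishes on a nonzero boundary vector. Then $\Delta(T)=0$, yet $T(C\setminus\{0\})$ contains $0$. Concretely, on $\mathbb{R}^2_+$ let $T(x_1,x_2)=(x_1,x_1)$, $x=(1,2)$, $y=(1,1)$. This gives $u=(0,1)$ and $Tu=0$, so $\mu,\nu$ are undefined and this case needs separate treatment. The cleaner fix is to perturb outward rather than approximate boundary points:
\begin{itemize}
\item For $0<m'<m$ and $M'>M$, the vectors $u'=x-m'y$ and $v'=M'y-x$ lie in $C^\circ$.
\item Therefore $\log(\nu'/\mu')=d_H(Tu',Tv')\le\Delta(T)$ holds by the very definition of $\Delta(T)$.
\item The same identities and the derivative bound give $d_H(Tx,Ty)\le\tanh(\Delta/4)\log(M'/m')$.
\item Letting $m'\uparrow m$ and $M'\downarrow M$ finishes the proof with no semicontinuity argument and no degenerate cases.
\end{itemize}

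Third, two slips in the last step; neither affects the conclusion.
\begin{itemize}
\item The closed form should read $2\log[\cosh((a+b)/4)/\cosh((a-b)/4)]$, without the stray factor.
\item The supremum of the $a$-derivative is attained at $\mu\nu e^{2a}=1$, not at $\mu\nu e^{a}=1$. The latter is the maximizer of the function itself.
\end{itemize}
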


The following states an exponential decay of contraction for an irreducible QMS. 

\begin{corollary}
	Let $\mathcal{T}$ be an irreducible QMS on a finite dimensional Hilbert space $\mathcal{H}$.
	Choose two density matrices $x,y\in \mathcal{B}(\mathcal{H})$.
    Let $\|x\|_1$ be the trace norm of $x$,  i.e. the sum of all singular values of $x$.
	Then there exist constants $c,C>0$ such that
	\begin{align}
	\|\mathcal{T}_t x-\mathcal{T}_t y\|_1\le C e^{-ct},\f t>0.
	\end{align}
\end{corollary}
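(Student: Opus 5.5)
The plan is to combine the positivity-improving property of Theorem~\ref{positivity_improving} with Birkhoff's contraction theorem (Theorem~\ref{Birkhoff}). We work in the Schrödinger picture on density matrices, i.e. with the predual maps $\mathcal T_{t*}$. Irreducibility of $\mathcal T_t$ is equivalent to irreducibility of $\mathcal T_{t*}$, since the generating algebra of $\{\hat G,\hat L_k\}$ coincides with that of their adjoints (a von Neumann algebra is $*$-closed). Hence $\mathcal T_{t*}$ is also positivity improving. The cone is $C=\{x\in\mathcal B(\mathcal H):x=x^\dagger\ge0\}$, which is closed, pointed and convex, and whose interior consists of the positive definite matrices.

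First, fix some $t_0>0$. By positivity improvement, $T:=\mathcal T_{t_0*}$ maps every nonzero $x\ge0$ to a strictly positive operator. To bound $\Delta(T)$ we use compactness. The set of density matrices is compact, and $\rho\mapsto T\rho$ is continuous with values in the positive definite matrices, so $\lambda_{\min}(T\rho)\ge \epsilon>0$ and $\lambda_{\max}(T\rho)\le 1$ uniformly in $\rho$. Since $d_H$ is projectively invariant, we may take normalized inputs. For any $x,y$ we then have $\epsilon\,\id\le Tx\le\id$ and $\epsilon\,\id\le Ty\le \id$. This gives $Tx\le \epsilon^{-1}Ty$ and $Ty\le\epsilon^{-1}Tx$, so $d_H(Tx,Ty)\le 2\log(1/\epsilon)$ and $\Delta(T)<\infty$. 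Birkhoff's theorem then yields a factor $\kappa=\tanh(\Delta(T)/4)<1$.

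Next we iterate. Write $t=nt_0+s$ with $0\le s<t_0$. Apply $T$ once to arbitrary density matrices $x,y$ to land in the interior; this is needed because $x,y$ themselves may be rank-deficient, where $d_H$ is infinite. After that first step $d_H(Tx',Ty')\le\Delta(T)$ for any $x',y'$, with $x'=\mathcal T_{s*}x$ and $y'=\mathcal T_{s*}y$. Each further application multiplies $d_H$ by at most $\kappa$, so
\[
d_H(\mathcal T_{t*}x,\mathcal T_{t*}y)\le \Delta(T)\,\kappa^{n-1}.
\]

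Finally, we convert the Hilbert metric to the trace norm. For density matrices $\rho,\sigma$ with $d_H(\rho,\sigma)=\delta$, the definitions of $M$ and $m$ give $m\sigma\le\rho\le M\sigma$. Taking traces, which are equal to $1$, forces $m\le1\le M$. Hence $\rho-\sigma\le (M-1)\sigma$ and $\sigma-\rho\le(1-m)\sigma$, and therefore
\[
\|\rho-\sigma\|_1\le (M-1)+(1-m)\le 2(M/m-1)=2(e^{\delta}-1).
\]
Combining this with the bound $\delta\le\Delta(T)\kappa^{n-1}$ shows that $\|\mathcal T_tx-\mathcal T_ty\|_1\le C e^{-ct}$, with $c=-\log\kappa/t_0>0$ and a constant $C$ that absorbs the first steps. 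Note that the trace norm is bounded by $2$ anyway for small $t$.

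The main obstacle is the finite projective diameter. It requires a uniform lower bound on the spectrum of $T\rho$ over all density matrices, including pure ones, and this comes from compactness plus positivity improvement. One should check that positivity improvement holds for each fixed $t_0>0$ and not only asymptotically, which Theorem~\ref{positivity_improving} asserts. The passage between the Heisenberg and Schrödinger pictures also deserves a careful sentence.
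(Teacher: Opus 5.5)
Your proposal is correct and follows the same route as the paper. Positivity improvement plus compactness gives a finite projective diameter for $\mathcal{T}_{t_0}$, Birkhoff's theorem then gives a contraction factor $<1$, you iterate over blocks of length $t_0$ and convert the Hilbert metric to a trace-norm bound, and your preliminary application of $\mathcal{T}_{t_0}$ makes the argument cleaner than the paper's: it avoids the paper's constant $C=q^{-1}d_H(x,y)$, which is infinite for rank-deficient (e.g.\ pure) initial states, and it yields constants $c,C$ independent of $x,y$.
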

\begin{proof}
	Fix some $t_0$.
	Since $\mathcal{T}$ is positivity improving,
	$K:=\mathcal{T}_{t_0}(\Sigma)$ is compact for a compact $\Sigma$.
	Choosing $\Sigma$ the unit ball in $\B(H)$,
	we obtain $\Delta(\mathcal{T}_{t_0})=\sup_{x,y\in K}d_H(x,y)<\infty$.
	Then applying the Birkhoff contraction theorem,
	we get 
	\begin{align}
	d_H(\mathcal{T}_{t_0}x,\mathcal{T}_{t_0}y)\le q d_H(x,y)
	\end{align}
	for $q<1$. This also implies that 
	\begin{align}
	d_H(\mathcal{T}_{nt_0}x,\mathcal{T}_{nt_0}y)\le q^n d_H(x,y).
	\end{align}
	Now for any $t>0$,
	we can write $t=nt_0+s$, where $0<s<t_0$,
	then 
	\begin{align}
	d_H(\mathcal{T}_tx,\mathcal{T}_ty)\le q^n d_H(\mathcal{T}_sx,\mathcal{T}_sy)\le q^n d_H(x,y)\le C e^{-ct},
	\end{align}
	where $c=-\frac{\log q}{t_0}$ and $C=q^{-1} d_H(x,y)$.
	However, if $A$ and $B$ are two density matrices then $\|A-B\|_1\le \frac{d_H(A,B)}{2}$, hence we also deduce that 
          \begin{align}
    	\|\mathcal{T}_t x-\mathcal{T}_t y\|_1\le C e^{-ct},\f t>0.
         \end{align}
    \qedhere
\end{proof}

\section{Master equations unravelings with Kraus operators}
Generally, within the weak-measurement framework, the dynamics over an infinitesimal time step can be described using the language of Kraus operators~\cite{Nielsen2010}. 
In this section, we will discuss the synchronization effect under different measurements/unravelings.

\subsection{Kraus operators}
At the level of a short Born--Markov evolution step, the operator-sum representation provides a general description of the dynamics,
\begin{align}
    \mathcal E(\hat\rho)=\sum_m\hat K_m\hat\rho\hat K_m^\dagger,
\end{align}
with Kraus operators $\hat K_m$ that fulfill $\sum_{m}\hat K_{m}^{\dagger}\hat K_m=\id$.
For a given Kraus decomposition, the action of the channel can be interpreted as an ensemble of conditional state updates. Quantum trajectories follow from this, since one can write
 \begin{align}
\mathcal E(\hat{\rho})=\sum_{m}p_{m}\frac{\hat K_m\hat\rho\hat K_m^{\dagger}}{\tr(\hat K_m\hat\rho\hat K_m^{\dagger})}
\end{align}
with $\sum_m p_m=1$. So with probability $p_{m}$ we can randomly replace the state with the normalized one after Kraus operator application. This can be done on the state-vector level.

The Kraus representation of a quantum channel is not unique. Let $\{\hat K_m\}$ be one Kraus representation of $\mathcal E$. Another set $\{\tilde{\hat{K}}_m\}$ represents the same channel
\begin{align}
    \mathcal E(\hat\rho)=\sum_m\hat K_m\hat\rho\hat K_m^\dagger=\sum_m\tilde{\hat{K}}_m\hat\rho\tilde{\hat{K}}_m^\dagger,
\end{align}
 as long as
\begin{align}
    \tilde{\hat{K}}_{m'}=\sum_{m=1}^{q} v_{m'm}\hat K_{m} \quad \text{or} \quad \begin{bmatrix}
        \tilde{\hat{K}}_1\\
        \vdots\\
        \tilde{\hat{K}}_{q'}
    \end{bmatrix}=
    \mathbf V \begin{bmatrix}
        \hat K_1\\
        \vdots\\
        \hat K_q
    \end{bmatrix},
\end{align}
with $\mathbf V=(v_{m'm})$ an semi-unitary matrix. 
Physically, this freedom corresponds to the freedom to choose the measurement basis of the environment after it has interacted and become entangled with the system. Different choices of this environmental measurement basis lead to different conditional state updates and hence to different quantum trajectories.

We now specialize this general construction to the model considered in the main text. Its unconditional dynamics is governed by
\begin{align}
    \frac{d\hat\rho}{dt} = -i[\hat H,\hat\rho] + \gamma \sum_{k=1}^{N} \mathcal D[\hat L_k]\hat\rho + \Gamma \mathcal D[\hat O]\hat\rho.
\end{align}
The operators ${\hat L_k}$ describe unmonitored dissipative channels, whereas $\hat O$ denotes the monitored channel. 
The freedom in choosing an unraveling concerns the measurement performed on  the environmental output associated with $\hat O$. The Hamiltonian and the unmonitored channels remain the same for all unravelings considered below.

Over an infinitesimal interval $\Delta t$, a reference Kraus representation of the monitored channel is
\begin{align}
\mathcal E_{\mathrm{coll}}^{\Delta t}(\hat\rho) = \hat K_0\hat\rho\hat K_0^\dagger + \hat K_1\hat\rho\hat K_1^\dagger
+o(\Delta t^2),
\end{align}
with
\begin{align}
    \hat K_0 &=\id - \frac{\Gamma \Delta t}{2}\hat O^\dagger\hat O,\label{eq:K0}\\
    \hat K_1 &= \sqrt{\Gamma \Delta t}\hat O.\label{eq:K1}
\end{align}
Quantum trajectory unraveling according to those Kraus operators gives the usual ``number unraveling'' (Quantum Monte-Carlo wave function algorithm).

In general we may change the Kraus operators with a $2\times 2$ unitary parametrized by two angles
\begin{align}
\begin{bmatrix}
    \tilde{\hat K}_0\\
    \tilde{\hat K}_1
\end{bmatrix}
=
\begin{bmatrix}
    \cos\theta
    &
    e^{-i\phi}\sin\theta
    \\
    -e^{i\phi}\sin\theta
    &
    \cos\theta
\end{bmatrix}
\begin{bmatrix}
    \hat K_0\\
    \hat K_1
\end{bmatrix}.
    \label{eq:Klaus_Operator}
\end{align}
Here, $\theta$ is a mixing angle between number measurement and homodyne unraveling, and $\phi$ is the homodyne phase.

Taking the example choice of homodyne-unraveling Kraus operators with mixing ($\theta=\pi/4, \phi=0$; an irrelevant global phase of $\tilde{\hat K}_1$ is omitted):
\begin{align}
    \tilde{\hat K}_0
    &= \frac{1}{\sqrt{2}}
    \left[\id + \sqrt{\Gamma \Delta t}\hat O -\frac{\Gamma \Delta t}{2}\hat O^\dagger\hat O \right],\label{eq:tildeK0}
    \\
   \tilde{\hat K}_1
    &=
    \frac{1}{\sqrt{2}}
    \left[\id -\sqrt{\Gamma \Delta t}\hat O - \frac{\Gamma dt}{2} \hat O^\dagger\hat O\right].
    \label{eq:tildeK1}
\end{align}
Keeping the linear terms is important, as they give rise to linear terms in the Kraus channel.

\subsection{Qubit ancilla}
Kraus operators follow from a process where the system is entangled with an ancilla system, followed by measurement of that ancilla, or equivalently, by just changing the basis of the partial trace operation after the interaction.

Specifically, the operation leading to the Kraus operator is:

\begin{enumerate}
\item Consider a separable initial system-bath state (Born approximation),
\begin{align}
\hat{\xi}_0 = \hat{\rho}_0 \otimes |0\rangle\langle 0|.
\end{align}
Here, the left part is the system, and the right part the ancilla. We choose a specific initial ancilla state, $\ket{0}$, which is in principle arbitrary. 
Let's call it the reset state. In practice, in a continuously measured system we need to reset to this state after every measurement.

\item We turn on a coupling that entangles the system with the ancilla via a unitary $\hat{U}$, leading to the state
\begin{align}
\hat{\xi}_1
=\hat{U}\hat{\xi}_0\hat{U}^\dagger
=\hat{U}\left(\hat{\rho}\otimes |0\rangle\langle0|\right)\hat{U}^\dagger.
\end{align}

\item We trace over the bath to obtain the reduced system state.
\begin{align}
\hat{\rho}_{\mathrm{red}}
=\tr_{\mathrm{anc}}(\hat{\xi}_1)
=\sum_b\langle b|\hat{\xi}_1|b\rangle.
\end{align}
Here, $\ket{b}$ are arbitrary basis states of the ancilla. 
This is the mechanism that we can simulate by random projective measurements onto the bath basis $\{\ket{b}\}$. Throwing away the information coming out of this measurement is the Markov approximation.
\end{enumerate}

From this operation, the Kraus operators simply follow from the bath-matrix elements corresponding to this process,
\begin{align}
\hat{\rho}_{\mathrm{red}}
=\sum_b \langle b| \hat U\left(\hat\rho\otimes|0\rangle\langle0|\right)\hat U^\dagger|b\rangle
=\sum_b \hat K_b \hat\rho \hat K_b^\dagger,
\qquad \hat K_b \equiv \langle b|\hat U|0\rangle .
\end{align}

Here, as ancilla we focus on a qubit, and define the Pauli bath operators $\hat B = \hat\sigma^{x,y,z}$, with the property $\hat B^2 = \id$. We will always use the reset state $\ket{0}$ of the qubit, i.e. the $-1$ eigenstate of $\hat\sigma^{z}$.

Let's first consider an entangling operation with a hermitian system operator $\hat{A}$, $\hat A^\dagger = \hat A$. We define a Hamiltonian coupling with coupling strength $g$ as
\begin{align}
\hat H_g=g \hat A \otimes \hat B,
\end{align}
and consider a unitary evolution over a dimensionless timestep $\tau = \Delta t\,g$, the unitary evolution is
\begin{align}
\hat U=e^{-i\tau\hat A\otimes\hat B}&=\cos(\tau\hat A)\otimes\id-i\sin(\tau\hat A)\otimes\hat B=\id\otimes\id-i\tau\hat A\otimes\hat B-\frac{\tau^2}{2}\left(\hat A^2\otimes\id\right)+\mathcal{O}(\tau^3).
\end{align}
Depending on the entangling time, for an ancilla measurement basis $\{\ket{b_0},\ket{b_1}\}$, the Kraus operators are
\begin{align}
\hat K_0
&=\bra{b_0}\hat U\ket{0}=\cos(\tau\hat A)\langle b_0|0\rangle -i\sin(\tau\hat A)\bra{b_0}\hat B\ket{0}
=\left(\id-\frac{\tau^2}{2}\hat A^2\right)\langle b_0|0\rangle -i\tau\hat A\bra{b_0}\hat B\ket{0}+
\mathcal{O}(\tau^3),\\
\hat K_1
&=\bra{b_1}\hat U\ket{0}=\cos(\tau\hat A)\langle b_1|0\rangle-i\sin(\tau\hat A)
\bra{b_1}\hat B\ket{0}=\left(\id-\frac{\tau^2}{2}\hat A^2\right)\langle b_1|0\rangle-i\tau\hat A\bra{b_1}\hat B\ket{0}+\mathcal{O}(\tau^3).
\end{align}
For a continuous-measurement limit it is necessary to retain terms through second order in $\tau$, because later we identify $\tau^2$ with a real time increment.

\subsection{Homodyne unraveling and Wiener statistics}
Take $\hat B=\hat\sigma^y$, and choose the basis  $\ket{b_0}=\ket{+}$, $\ket{b_1}=\ket{-}$ with $\ket{b_\pm}=(\ket{0}\pm \ket{1})/\sqrt{2}$. Then, using $\bra{1}\hat\sigma^y\ket{0}=i$, the two Kraus operators become
\begin{align}
\hat K_0^h
&=\frac{1}{\sqrt2}\cos(\tau\hat A)+\frac{1}{\sqrt2}\sin(\tau\hat A)
\simeq\frac{1}{\sqrt2}\id+\frac{\tau}{\sqrt2}\hat A-\frac{\tau^2}{2\sqrt2}\hat A^2+\mathcal{O}(\tau^3) \\
\hat K_1^h
&=\frac{1}{\sqrt2}\cos(\tau\hat A)-\frac{1}{\sqrt2}\sin(\tau\hat A)
\simeq\frac{1}{\sqrt2}\id-\frac{\tau}{\sqrt2}\hat A-\frac{\tau^2}{2\sqrt2}\hat A^2+\mathcal{O}(\tau^3).
\end{align}
Therefore, choosing an entangling time $\tau=\sqrt{\Gamma\Delta t}$, the Kraus operators and the resulting unraveling is equivalent to the homodyne unraveling from Eqs.~\eqref{eq:tildeK0}--\eqref{eq:tildeK1} with $\hat O=\hat A$.

We now derive the corresponding stochastic differential equation.
Let's define a general ancilla measurement basis as
\begin{align}
\ket{b_0}&=\cos(\theta)\ket{0}+e^{i\phi}\sin(\theta)\ket{1},\\
\ket{b_1}&=-e^{-i\phi}\sin(\theta)\ket{0}+\cos(\theta)\ket{1}.
\end{align}
For the short-time expansion of the Kraus operators, this leads to the notation
\begin{align}
\hat K_0^h&=\cos(\theta)\left(\id-\frac{\tau^2}{2}\hat A^2\right)+\tau\sin(\theta)e^{-i\phi}\hat A+\mathcal{O}(\tau^3),\\
\hat K_1^h&=-\sin(\theta)e^{i\phi}\left(\id-\frac{\tau^2}{2}\hat A^2\right)+\tau\cos(\theta)\hat A+\mathcal{O}(\tau^3).
\end{align}
Now for a general system state we can compute the sample probabilities for the two possible outcomes:
\begin{align}
p_0=\langle\hat K_0^{h\dagger}\hat K_0^h\rangle
&=\cos^2(\theta)\left(1-\tau^2\langle\hat A^2\rangle\right)+2\cos(\theta)\sin(\theta)\cos(\phi)\tau\langle\hat A\rangle
+\tau^2\sin^2(\theta)\langle\hat A^2\rangle+
\mathcal{O}(\tau^3),\\
&=\cos^2(\theta)+\sin(2\theta)\cos(\phi)\tau\langle\hat A\rangle-\tau^2\cos(2\theta)\langle\hat A^2\rangle+\mathcal{O}(\tau^3).\\
p_1=\langle\hat K_1^{h\dagger}\hat K_1^h\rangle
&=\sin^2(\theta)-\sin(2\theta)\cos(\phi)\tau\langle\hat A\rangle+\tau^2\cos(2\theta)\langle\hat A^2\rangle+\mathcal{O}(\tau^3).
\end{align}
We now want to turn this random process into a stochastic differential equation. Therefore, we define a random variable $r=\pm 1$ corresponding to the measurement outcomes of selecting $\hat K_0^h$ and $\hat K_1^h$, respectively. The expectation value of this variable is
\begin{align}
\bar r=p_0-p_1=\cos(2\theta)+2\tau\langle\hat A\rangle\sin(2\theta)\cos(\phi)+\mathcal{O}(\tau^2).
\end{align}
Here, we will not need the second order terms as it will become evident. The second moment and variance are
\begin{align}
\overline{r^2}&=p_0+p_1=1+\mathcal{O}(\tau^3).\\
\Delta r&=\overline{r^2}-\bar r^2=1-\cos^2(2\theta)+\mathcal{O}(\tau^2)=\sin^2(2\theta)+\mathcal{O}(\tau^2).
\end{align}
From this, let's define a random time-dependent variable whose mean value drifts linearly in time $t=m\,dt$ and whose variance grows as $\sim t$. Such a variable can be mapped onto a Wiener process. 
We therefore introduce the physical time step $dt$ through the relation $\tau^2 = \Gamma\,dt$,
\begin{align}
y= \sqrt{dt} \frac{r-\cos(2\theta)}{\sin(2\theta)}.
\label{eq:y}
\end{align}
The two possible values are
\begin{align}
r=1
&\quad\Longrightarrow\quad
y=\sqrt{dt}\frac{1-\cos(2\theta)}{\sin(2\theta)}
=\sqrt{dt}\frac{2\sin^2(\theta)}{2\sin(\theta)\cos(\theta)}
=\sqrt{dt}\tan(\theta),\nonumber\\[1ex]
r=-1
&\quad\Longrightarrow\quad y=-\sqrt{dt}\cot(\theta).
\end{align}
Then we get
\begin{align}
\bar y&=2\sqrt{\Gamma}\,dt\,\langle\hat A\rangle\cos(\phi)+\mathcal{O}(dt^{3/2}),\\
\overline{y^2}&=dt\frac{\overline{r^2}-2\bar r\cos(2\theta)+\cos^2(2\theta)}{\sin^2(2\theta)}=dt+\mathcal{O}(dt^{3/2}).
\end{align}
Note that in the ideal homodyne, $\cos(2\theta)=0$ and $\sin(2\theta)=1$. In the equations for $y$, the angle $\theta$ does not show up, however, for the pure number unraveling $y$ is ill-defined since then $\sin(2\theta)\to 0$.
From the random variable $y$ we can now define a Wiener increment (with zero mean) as
\begin{align}
dW\equiv y-\bar y =y-2dt\sqrt{\Gamma}\langle\hat A\rangle\cos(\phi).
\label{eq:dW}
\end{align}
This random variable now fulfills in linear order in $dt$
\begin{align}
\overline{dW}&=0,\\
\overline{dW^2}&=dt+\mathcal{O}(dt^{3/2}),
\end{align}
it's a proper Wiener increment. The measurement outcome can be expressed as function of $dW$. From Eq.~\eqref{eq:y} and
~\eqref{eq:dW} it follows that
\begin{align}
r=y\frac{\sin(2\theta)}{\sqrt{dt}}+\cos(2\theta)
=\left[\frac{dW}{\sqrt{dt}}+2\sqrt{dt\Gamma}\langle\hat A\rangle\cos(\phi)\right]\sin(2\theta)+\cos(2\theta).
\end{align}
This equation relates the Wiener process to the random outcomes of $\hat K_0^h$ or $\hat K_1^h$. For the ideal homodyne case only the first term is relevant.

\subsection{Number measurement unraveling and counting statistics}
The number unraveling is obtained by choosing $\theta=0$, such that $\ket{b_0}=\ket{0}$, $\ket{b_1}=\ket{1}$. The two outcomes distinguish whether the ancilla remains in its vacuum state or acquires one excitation. 
The general Kraus operators then reduce to
\begin{align}
\hat K_0^n&=\cos(\tau\hat A)\simeq\id-\frac{\tau^2}{2}\hat A^2+\mathcal{O}(\tau^3),\\
\hat K_1^n&=\sin(\tau\hat A)\simeq\tau\hat A+\mathcal{O}(\tau^3).
\end{align}
Now the Kraus are equivalent to those of the number unraveling in Eqs.~\eqref{eq:K0} and \eqref{eq:K1}, again
choosing entangling time $\tau=\sqrt{\Gamma\Delta t}$ with identifying $\hat O=\hat A$.

The corresponding measurement probabilities are
\begin{align}
p_0&=1-\Gamma\langle\hat A^2\rangle dt+\mathcal{O}(dt^2),\\
p_1&=\Gamma\langle\hat A^2\rangle dt+\mathcal{O}(dt^2).
\end{align}

Introduce the counting increment $dN_t\in\{0,1\}$, with $dN_t=1$ corresponding to a detection event and $dN_t=0$ to no detection. Its conditional mean is
\begin{align}
\overline{dN}=\nu_t dt,\quad \nu_t=\Gamma\langle\hat A^2\rangle.
\end{align}
 $d N_t$ is the increment of what is mathematically called a point process, it is binary and satisfies $(dN_t)^2=dN_t$. $\nu_t$ is the conditional counting intensity, which depends on the state. 
Only when the intensity is constant, i.e. $\nu_t=\nu$, does the process reduce to an ordinary Poisson process, for which the accumulated number of counts obeys
\begin{align}
\mathbb{P}(N_t=n)=e^{-\nu t}\frac{(\nu t)^n}{n!}.
\end{align}

\subsection{Stochastic differential equations}
For $\theta=0$ (or $\theta=\pi/2$), the measurement corresponds to the number unraveling, and the record is described by a counting process $dN_t$. For any fixed $0<\theta<\pi/2$, the measurement admits a diffusive limit and is described by a Wiener increment $dW_t$, wherein $\theta=\pi/4$ giving the standard balanced homodyne case. After centering and rescaling the measurement record, the dependence on $\theta$ drops out of the resulting stochastic differential equation.

Accordingly, the corresponding normalized stochastic master equations are
\begin{align}
d\hat\rho_c
&=\mathcal{L}\hat\rho_c dt+\mathcal{J}[\hat O]\hat\rho_c \left(dN_t-\Gamma\langle \hat O^\dagger\hat O\rangle dt\right),
\qquad \text{(number unraveling)},\\
d\hat\rho_c
&=\mathcal{L}\hat\rho_c dt+\sqrt{\Gamma} \mathcal{H}[e^{-i\phi}\hat O]\hat\rho_c dW_t,
\qquad \text{(homodyne unraveling)}.
\end{align}
where
\begin{align}
\mathcal{J}[\hat O]\hat\rho&=\frac{\hat O\hat\rho\hat O^\dagger}{\tr\left[\hat O^\dagger\hat O\hat\rho\right]}-\hat\rho,\\
\mathcal{H}[\hat O]\hat\rho&=\hat O\hat\rho+\hat\rho\hat O^\dagger-\tr\left[(\hat O+\hat O^\dagger)\hat\rho\right]\hat\rho .
\end{align}

\section{Stochastic contraction}
In this section, we establish the single-trajectory contraction for the two monitoring schemes introduced above, namely the number and diffusive homodyne measurements unravelings.

The underlying logic is that an irreducible QMS gives contraction for the averaged unconditional dynamics, as discussed in \ref{section:SM1}. 
The statistics of the measurement noise then allow this contractive behavior to be inherited by the conditioned dynamics along single trajectories, yielding stochastic contraction.

Let $\mathcal{L}$ induce an irreducible QMS.
Then by Theorem ~\ref{positivity_improving}, $ e^{\mathcal{L}\tau}(\hat X)>0$, where $\hat X$ is the density matrix, $\tau$ is a brief period of evolution time, hence we have
\begin{align}
    a \tr(\hat X)\hat I
    \leq
    e^{\mathcal{L}\tau}(\hat X)
    \leq
    b \tr(\hat X)\hat I,
    \qquad
    \f \hat X\geq0,
    \label{eq:bound0}
\end{align}
for some $0<a<b<\infty$ independent of $\hat X$.
For a fixed measurement record $\omega$ over $[t,t+\tau]$, let $\Phi_\omega$ denote the corresponding unnormalized conditional evolution, including the Hamiltonian and unmonitored local dissipative evolution generated by $\mathcal L_0$. The map $\Phi_\omega$ is completely positive. For any density matrix $\hat X$ such that $\tr[\Phi_\omega(\hat X)]>0$.
The normalized conditional map is
\begin{align}
    \widetilde{\Phi}_{\omega}(\hat X)=\frac{\Phi_{\omega}(\hat X)}{\tr[\Phi_{\omega}(\hat X)]}.
\end{align}

We study the evolution of two positive operators $\hat X,\hat Y>0$ propagated by the record $\omega$.
Since $\Phi_{\omega}$ is positive, the Hilbert's projective metric is preserved.
It follows directly that
\begin{align}
    d_{H}
    \left(
        \widetilde{\Phi}_{\omega}(\hat X),
        \widetilde{\Phi}_{\omega}(\hat Y)
    \right)=
    d_{H}
    \left(
        \Phi_{\omega}(\hat X),
        \Phi_{\omega}(\hat Y)
    \right)
    \leq
    d_{ H}(\hat X,\hat Y).
    \label{eq:projective}
\end{align}
Thus, an arbitrary measurement outcome cannot increase the Hilbert's projective distance between two states with the same record.
On the other hand,
we call a record block $\omega$ \emph{good}, if its conditional map obeys the positivity bound
\begin{align}
    a \tr(\hat X)\hat I
    \leq
    \Phi_{\omega}(\hat X)
    \leq
    b \tr(\hat X)\hat I,
    \qquad
    \f \hat X\geq0,
	\label{eq:bound}
\end{align}
for some $0<a<b<\infty$ independent of $\hat X$. Such a map $\Phi_{\omega}$ has finite projective diameter and, by Theorem ~\ref{Birkhoff}, we have
\begin{align}
    d_{ H}\left(\widetilde{\Phi}_{\omega}(\hat X),\widetilde{\Phi}_{\omega}(\hat Y) \right)\leq
    q d_{H}(\hat X,\hat Y),
    \qquad \e q<1.
	\label{eq:contraction}
\end{align}
In a word,
we have the contraction property on a good record block.

Let the evolution be divided into consecutive blocks of duration $\tau$, and let $G_n$ denote the event that the record in the $n$th block is good. Denote $N_G(n)$ as the number of good blocks among the first $n$ blocks, repeated use Eqs.~\eqref{eq:projective} and~\eqref{eq:contraction} gives
\begin{align}
    d_{ H}(\hat X_n,\hat Y_n)
    \leq
    q^{N_G(n)}d_{ H}(\hat X_0,\hat Y_0).
 	\label{eq:good_block}
\end{align}
Consequently, to show the exponential contraction, it remains only to show that $\liminf_{n\rightarrow\infty}\frac{N_G(n)}{n}>0$ almost surely.

A sufficient recurrence condition is as follows:
\begin{align}
    \e p_*>0,\,\f n,\,
    \mathbb P
    \left(
        G_n\mid\mathcal F_{n\tau}
    \right)
    \geq p_*.
    \label{eq:condition}
\end{align}
Combining this with Eq.~\eqref{eq:good_block} yields an exponential contraction of the projective distance.
Indeed, we have
\begin{align}
    \mathbb P
    \left(
        N_G(n)\ge \nu n
    \right)
    \geq
    \sum_{m\ge\nu n}C^m_n p_*^m(1-p_*)^{n-m}.
    \label{eq:nogood}
\end{align}
Taking $n\rightarrow\infty$, the right hand side converges to $1$ when $\nu< p_*$, by the law of large numbers applied to binomial distribution.
Hence good blocks occur infinitely at a positive density often almost surely, and Eq.~\eqref{eq:good_block} yields $d_{H}(\hat X_n,\hat Y_n) \le q^{\nu n}d_{ H}(\hat X_0,\hat Y_0)$ almost surely for any $\nu< p_*$. This establishes the exponential contraction for the projective distance.

The corresponding long-time contraction rate can be characterized within the standard Lyapunov framework. Since the linear evolution forms a random cocycle, the multiplicative ergodic theorem~\cite[Theorem~3.4.1]{Ludwig1998} ensures that the associated asymptotic exponent exists almost surely.
For related results on stochastic contraction, see Refs.~\cite{vanHandel2009,Amini2021,Amini2026}.

Next we prove the contraction for two different unravelings.
The essential ingredient of the argument is that we need to prove the set of finite-time records that produces uniform strict contraction (in particular, good record blocks) has a conditional probability bounded away from zero. Any unraveling satisfying these conditions leads to the same almost-sure contraction.

\subsection{Contraction for the number unraveling}
For the number unraveling, let $G_n^{(\mathrm N)}$ denote the event that no jump is detected during the time block $[n\tau,(n+1)\tau]$. The corresponding conditional evolution is the no-count propagator $e^{\mathcal L_{\mathrm{nc}}\tau}$. Since this propagator satisfies the condition in Eq.~\eqref{eq:bound}, every such no-jump block is strictly contractive, and therefore $G_n^{(\mathrm N)}$ is a sub-event of $G_n$.

The conditional jump intensity is $ \nu_t=\Gamma\tr[\hat O^\dagger\hat O\hat\rho_c(t)]$.
Suppose that, along the states relevant to this record block, $\nu_t\leq\nu_+<\infty$ is bounded above (which is true if $\hat{O}$ is a bounded operator, and it is always the case in practical examples).
Then the conditional probability of $G^{(\mathrm N)}_n$ is then bounded from below by
\begin{align}
    \mathbb P
    \left(
        G_n^{(\mathrm N)}
        \mid
        \mathcal F_{n\tau}
    \right)
    \geq
    e^{-\nu_+\tau}
    :=
    p_{\mathrm N}>0.
\end{align}
Therefore, every $G_n^{(\mathrm N)}$ has a uniformly positive conditional probability of occurring in each block. Since $G_n^{(\mathrm N)}\subseteq G_n$, we have
\begin{align}
    \mathbb P
    \left(
        G_n\mid\mathcal F_{n\tau}
    \right)
    \geq
    \mathbb P
    \left(
        G_n^{(\mathrm N)}
        \mid\mathcal F_{n\tau}
    \right)
    \geq
    p_{\mathrm N}>0.
\end{align}
Hence the condition in
Eq.~\eqref{eq:condition} is satisfied.
Hence the contractive events $G_n$ occur infinitely at a positive density often almost surely. Consequently, the quantum-jump trajectories satisfy that for $\nu < p_{\mathrm{N}}$:
\begin{align}
    d_{H}(\hat X_t,\hat Y_t)
    \le Cq^{\nu t} \quad(\text{as}\,\,t\to\infty)
    \qquad
    \text{almost surely}.
\end{align}

\subsection{Contraction for the homodyne unraveling.}
For the diffusive homodyne unraveling, let $W_t$ denote the Wiener process. Choose a continuous reference path $w_*(s)$, $0\leq s\leq\tau$, with $w_*(0)=0$, such that the associated conditional block map satisfies Eq.~\eqref{eq:condition}.
For example, for $w_*(s)=0$, the stochastic contribution vanishes and the corresponding map is $\Phi_{\omega}=e^{(\mathcal{L}_0+\Gamma \mathcal{D}[\hat{\mathcal{O}}])\tau}$ satisfies Eq.~\eqref{eq:bound} due to Eq.~\eqref{eq:bound0}.
Since the finite-time conditional propagator depends continuously on the record, the same uniform contraction remains valid in a sufficiently small neighborhood of this reference path. Take $\varepsilon$ the diameter of this neighborhood, then we can define the good diffusive block
\begin{align}
    G_n^{(\mathrm H)}
    =
    \left\{\sup_{0\leq s\leq\tau}\left| W_{n\tau+s}-W_{n\tau}- w_*(s)\right|<\varepsilon\right\}.
    \label{eq:wiener_block}
\end{align}

Then $G_n^{(\mathrm H)}$ is a sub-event of $G_n$.
Wiener measure has full support on the space of continuous paths starting at the origin. Hence every open tube of the form
Eq.~\eqref{eq:wiener_block} has strictly positive probability,
\begin{align}
    \mathbb P
    \left(
        G_n^{(\mathrm H)}\mid\mathcal F_{n\tau}
    \right)
    =p_{\mathrm H}>0.
    \label{eq:wiener_probability}
\end{align}
Moreover, Wiener increments over disjoint time blocks are independent,
so the same probability $p_{\mathrm H}$ applies to every block.
Hence condition Eq.~\eqref{eq:condition} holds as $G^{(\mathrm H)}_n$ is a sub-event of $G_n$,
which again implies for any $\nu< p_{\mathrm H}$,
we have
\begin{align}
    d_{H}(\hat X_t,\hat Y_t)
    \le C q^{\nu t}
    \quad(\text{as}\,\,t\to\infty)
    \qquad
    \text{almost surely}.
\end{align}

\subsection{Robustness of synchronization across different unravelings}
As discussed above, the stochastic contraction is independent of the ancilla measurement basis used in the weak-measurement construction. Combined with the transitive symmetry of the dynamics, the synchronization effect is also independent of the ancilla measurement basis. We now numerically investigate the synchronization dynamics of the minimal model introduced in the main text under several distinct unravelings of the same master equation. 

\begin{figure}[htbp]
    \begin{centering}
    \includegraphics[width=0.4\linewidth]{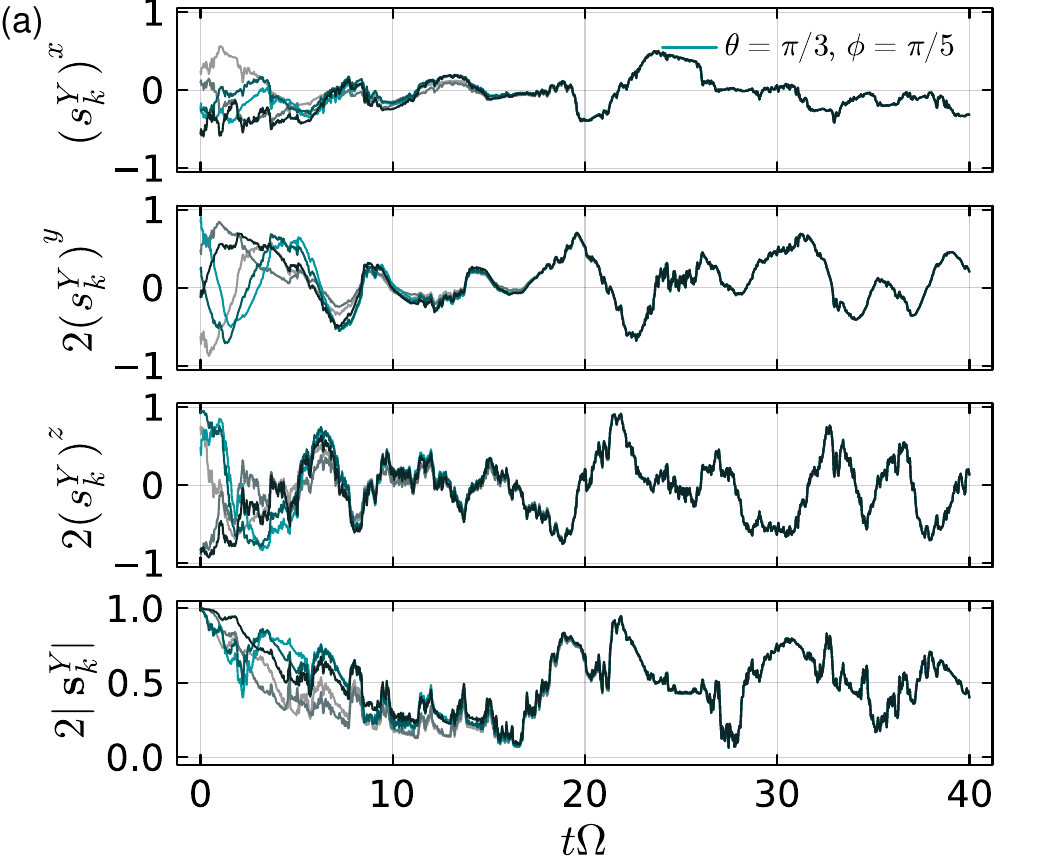}
    \includegraphics[width=0.4\linewidth]{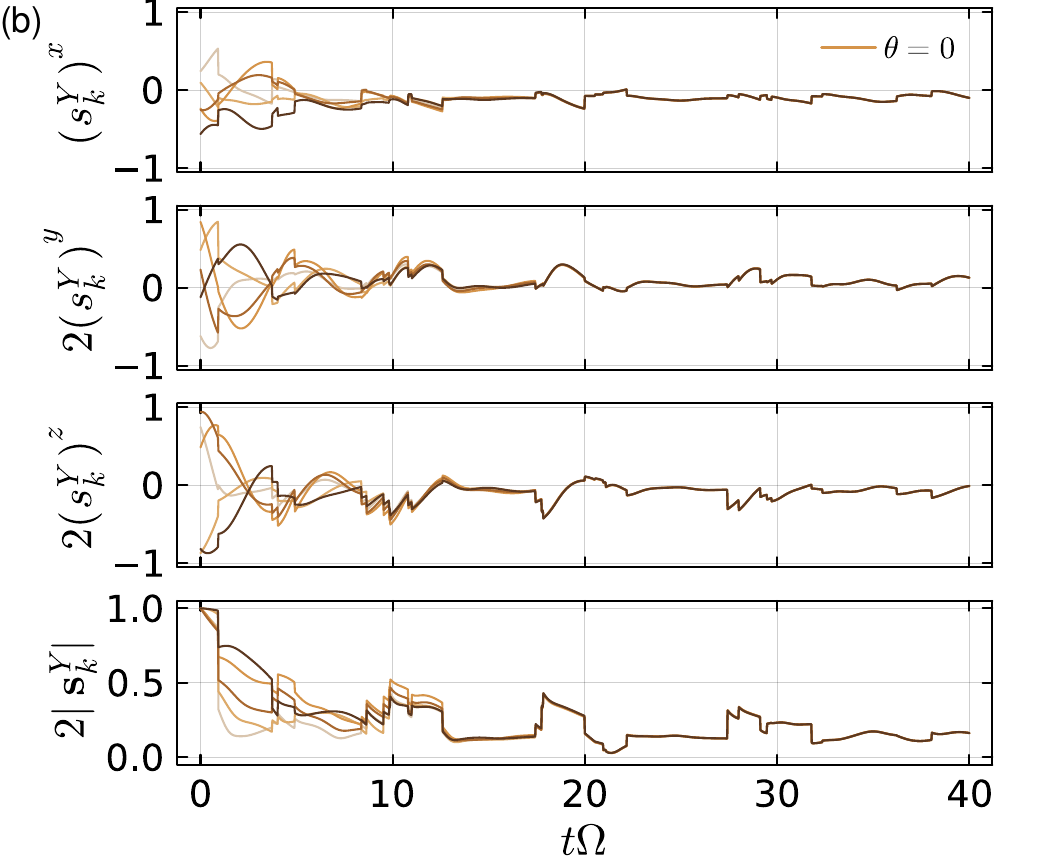}\\
    \par\end{centering}
    \caption{
    Single-trajectory synchronization dynamics under different unravelings. (a) Homodyne unraveling with $\theta=\pi/3$ and $\phi=\pi/5$. (b) Number unraveling with $\theta=0$. Solid lines in different colors represent different spins.
    Other parameters are identical to Fig.1(b) in the main text.}
    \label{Figs1}
\end{figure}

In addition to the homodyne unraveling with $\phi=0$ in the main text, we demonstrate through examples that the single-trajectory synchronization phenomenon exists under other unraveling---specifically, the homodyne unraveling with $\theta=\pi/3$ and $\phi=\pi/5$ (Fig.~\ref{Figs1}(a)), and the number unraveling with $\theta=0$ (Fig.~\ref{Figs1}(b)).
Despite the measurement records are fundamentally different, the mechanism of synchronization holds for both scenarios.

\section{Extensions beyond the minimal model}
In this section, we show the generality of the effect. We take three representative examples to illustrate three complementary extensions. An interacting chain shows that full permutation symmetry can be relaxed to a weaker transitive symmetry. A spin-$1$ ensemble demonstrates that the algebraic condition is not limited to spin-$1/2$ systems. Finally, the Tavis--Cummings model provides a promising implementation. 
These examples exhibit full, anti-, partial, and cluster synchronization.

\subsection{XY spin chain}
We first apply our framework to an $N$-spin nearest-neighbor XY chain in a transverse field. The Hamiltonian is
\begin{align}
\hat H_{\text{XY}} = \Omega \hat S_x+\frac{J}{2} \sum_{k}
\left( \hat\sigma_k^x \hat\sigma_{k+1}^x + \hat\sigma_k^y \hat\sigma_{k+1}^y \right) + \frac{1}{2}\sum_k \Delta_k \hat \sigma_k^z ,
\end{align}
For periodic boundary conditions (PBC), the sum runs over $k=1,\ldots,N$ with $\hat\sigma_{N+1}^\alpha\equiv \hat\sigma_1^\alpha$. For open boundary conditions (OBC), the sum runs over $k=1,\ldots,N-1$.

The dissipative part is consistent with Eq. 1 in the main text. We include unmonitored local spontaneous emission channels $\hat\sigma_k^-$, with $\gamma$ the decay rate, and continuously monitor the collective observable $\hat S^z$ through homodyne detection, with $\Gamma$ the measurement strength. So the conditional state $\hat \rho_{c}$ of the system evolves according to the It\^o stochastic master equation
\begin{align}
d\hat\rho_c
=-i[\hat H_{\text{XY}},\hat\rho_c]dt+ \gamma\sum_k \mathcal{D}[\hat\sigma^-_k]\hat\rho_c dt+\Gamma\mathcal{D}[\hat S^z]\rho_c dt+\sqrt{\Gamma}\mathcal{H}[e^{-i\phi}\hat S^z]\hat\rho_c dW_t, 
\end{align}
where $\mathcal D[\hat L]\hat\rho=\hat L\hat\rho \hat L^\dagger -\frac{1}{2}\{\hat L^\dagger \hat L,\hat\rho\}$, $\mathcal H[\hat O]\hat\rho = \hat O\hat\rho+\hat\rho\hat O^\dagger -\mathrm{Tr}[(\hat O+\hat O^\dagger)\hat\rho]\hat\rho$, $d W_t$ is a Wiener increment.

\begin{figure}[htbp]
    \begin{centering}
	\includegraphics[width=0.4\linewidth]{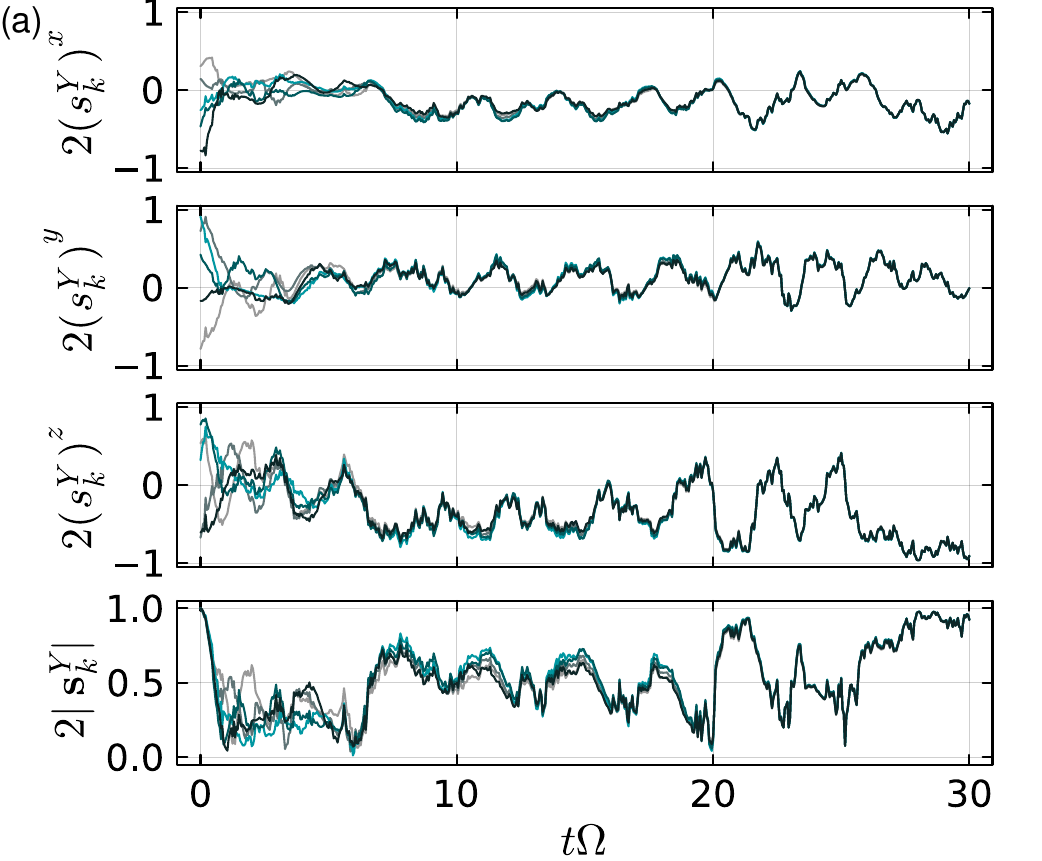}
    \includegraphics[width=0.4\linewidth]{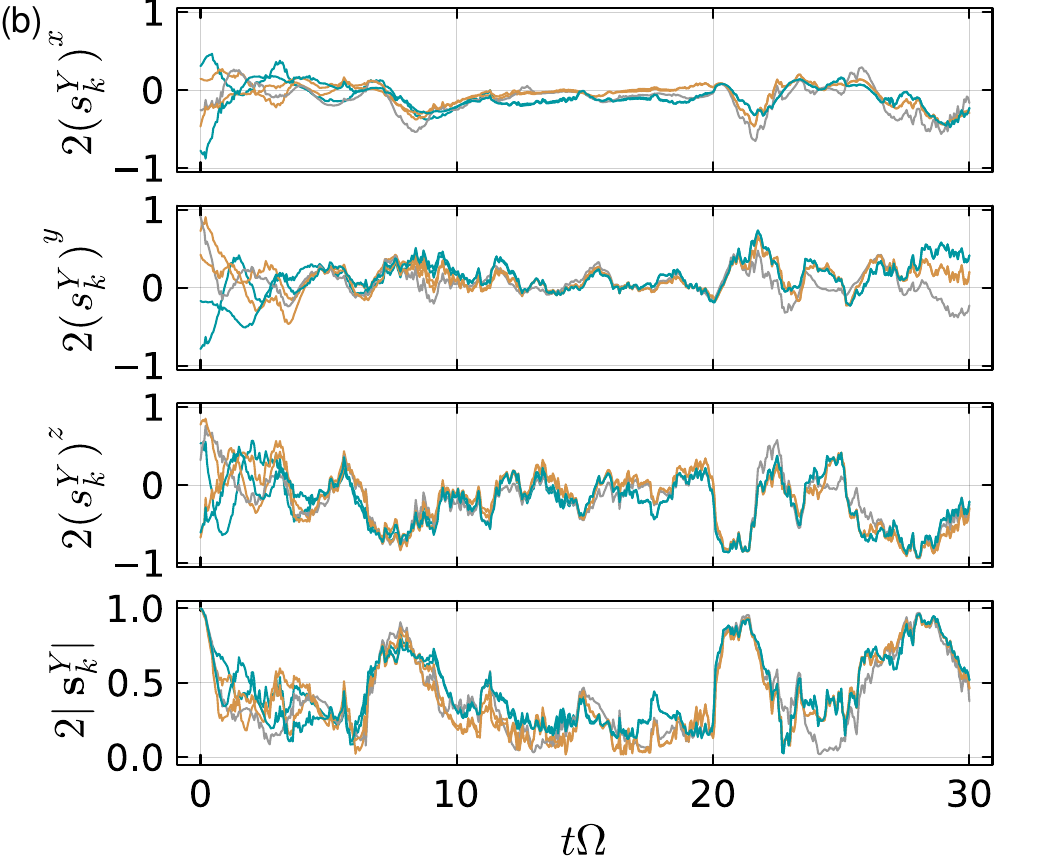}\\
    \par\end{centering}
    \caption{
    Synchronization in the nearest-neighbor XY spin chain.
    (a) Synchronization under PBC. Here $\Delta_k=0.5\Omega$ for all spins.
    (b) Cluster synchronization under OBC. We set $\Delta_k=(0.5,0.2,0.9,0.2,0.5)\Omega$, the synchronization occurs within the
    reflection-related pairs: spins $(1,5)$ in green and spins $(2,4)$ in orange, while the spin $3$ is shown in gray. Other parameters: 
    $N=5$, $J=\Omega$, $\Gamma=0.8\Omega$, $\gamma=0.1\Omega$, $\phi=0$.
    }
    \label{Figs2}
\end{figure}

For PBC,  we take $\Delta_k=\Delta$, the system has a translation symmetry. Let $\hat P_{12\cdots N}$ denote the cyclic translation operator that maps the sites $(1,2,\ldots,N)$ to $(2,3,\ldots,N,1)$, and define the corresponding action on an operator $\hat X$ as $\mathcal P_{12\cdots N}[\hat X] := \hat P_{12\cdots N}\hat X \hat P_{12\cdots N}^{\dagger}$.
Since the Hamiltonian, the local dissipative channels, and the collective measurement are invariant under this cyclic translation, the conditioned dynamics satisfies $\hat\rho_c^Y\left(t;\mathcal P_{12\cdots N}[\hat\rho_0]\right)=\mathcal P_{12\cdots N}\left[\hat\rho_c^Y(t;\hat\rho_0)\right]$.
Thus, although the model is not invariant under arbitrary permutations, the remaining translation symmetry still relates all sites to one another. Together with the full single-site algebra $\mathcal B(\mathbb C^2)$ generated on each spin, the synchronization mechanism remains valid. The numerical results are shown in Fig.~\ref{Figs2}.

As a comparison, we also consider OBC. In this case, the translation symmetry is broken, and not all spins can be related by a symmetry operation. Nevertheless, the open chain still retains a reflection symmetry, which maps site $k$ to site $N+1-k$. This also can be viewed as a partial breaking of the full permutation symmetry. The synchronization mechanism then applies only within the remaining symmetry-related pairs. For example, in Fig.~\ref{Figs2}, synchronization therefore occurs within the pairs $(1,5)$ and $(2,4)$, but not necessarily between different pairs.

\subsection{Spin-1 ensemble}
We next consider an ensemble of spin-$1$ particles. This example shows that the algebraic origin of the synchronization is not restricted to two-level local Hilbert spaces. Instead of Pauli operators, we use the spin-$1$ operators $\hat F_k^\alpha$ on site $k$, The local Hilbert space $\mathbb C^3$ is spanned by the Zeeman states $\ket{+}_k,\,\ket{0}_k,\,\ket{-}_k$. In this basis, the spin-$1$ matrices are
\begin{align}
F^z=
\begin{pmatrix}
1&0&0\\
0&0&0\\
0&0&-1
\end{pmatrix},
\qquad
F^x=\frac{1}{\sqrt2}
\begin{pmatrix}
0&1&0\\
1&0&1\\
0&1&0
\end{pmatrix},
\qquad
F^y=\frac{1}{\sqrt2}
\begin{pmatrix}
0&-i&0\\
i&0&-i\\
0&i&0
\end{pmatrix}.
\end{align}

We consider the spin-$1$ Hamiltonian
\begin{align}
\hat H_{\text{spin1}} = \Omega \hat J^x + \Delta \hat J^z
+ q\sum_{k=1}^{N}(\hat F_k^z)^2,
\end{align}
with $\hat J^\alpha=\sum_{k=1}^{N}\hat F_k^\alpha,\,\alpha=x,y,z$.

\begin{figure}[htbp]
    \begin{centering}
    \includegraphics[width=0.4\linewidth]{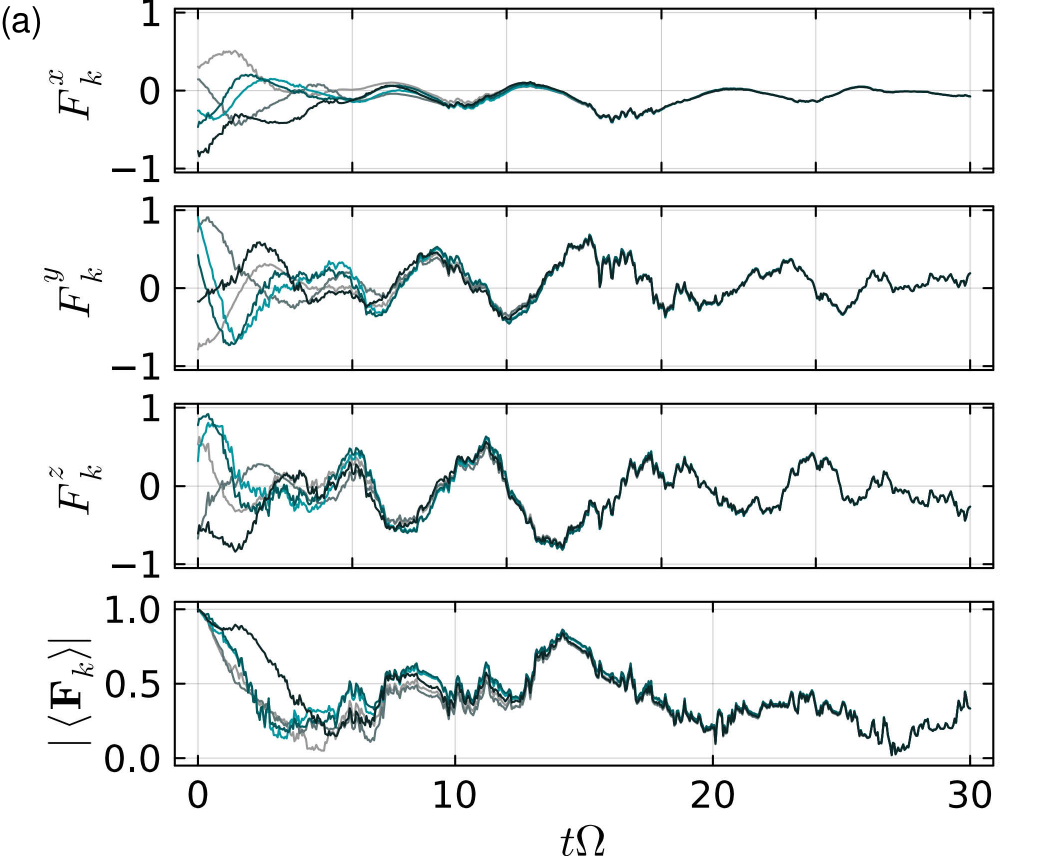}
    \includegraphics[width=0.4\linewidth]{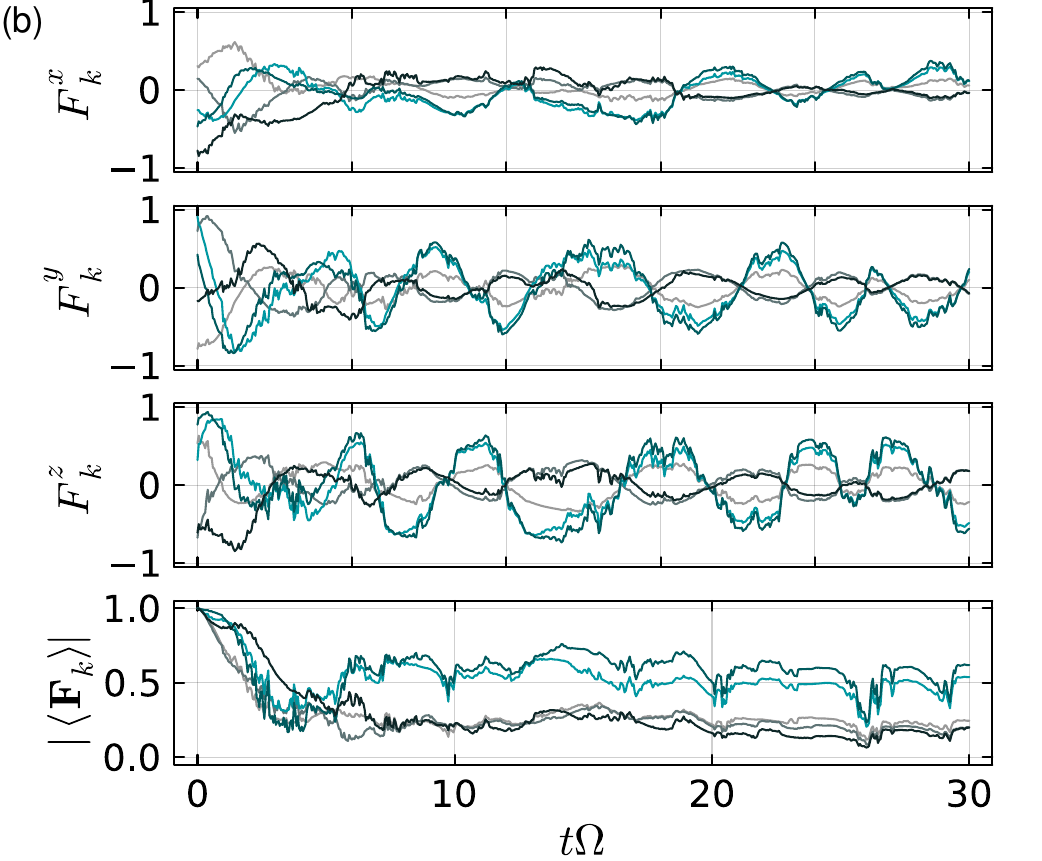}\\
    \par\end{centering}
    \caption{
    Synchronization in an ensemble of noninteracting spin-$1$ particles. 
    (a) Complete synchronization in the presence of local decay,
    $\gamma_{+0}=\gamma_{0-}=0.1\Omega$.
    (b) Reduced collinear synchronization in the absence of local decay.
    Other parameters: $N=5$, $\Delta=0.5\Omega$, $\Gamma=0.8\Omega$, $q=0.2\Omega$.
    }
    \label{Figs3}
\end{figure}

We include unmonitored local relaxation processes. For this spin-$1$ system with a quadratic Zeeman term, the most natural choice is to resolve the two adjacent transitions in the Zeeman ladder \(|+\rangle_k\rightarrow |0\rangle_k\rightarrow |-\rangle_k\). Accordingly, we define the local jump operators as $L_{k,+0} = |0\rangle_k\langle+|$ and $L_{k,0-} = |-\rangle_k\langle0|$, with corresponding dissipation rates $\gamma_{+0}$ and $\gamma_{0-}$. Then we continuously monitor the collective longitudinal magnetization $\hat{J}^z$ with the strength $\Gamma$. In this case, the conditional dynamics is described by
\begin{align}
    d\hat\rho_c = &-i[\hat H_{\text{spin1}},\hat\rho_c]dt + \sum_{k}\left(\gamma_{+0}\mathcal{D}[\hat L_{k,+0}]\hat\rho_c+\gamma_{0-}\mathcal{D}[\hat L_{k,0-}]\hat\rho_c\right)\nonumber\\
    &+\Gamma\mathcal D[\hat J^z]\hat\rho_c dt + \sqrt{\Gamma}\mathcal H[\hat J^z]\hat\rho_c dW_t.
    \label{eq:SME-spin1}
\end{align}
The local dissipations, together with the local coherent Hamiltonian, generate the von Neumann algebra $\mathcal B(\mathbb C^3)$
for each spin. Therefore, the irreducibility condition for the QMS is satisfied, and dynamical contraction holds. Numerically, we compute the evolution of the spin vectors
$\mathbf f_k(t) =\left(\langle F_k^x\rangle, \langle F_k^y\rangle, \langle F_k^z\rangle\right)$, as shown in Fig.~\ref{Figs3}(a). We also consider the partial collinear synchronization, shown in Fig.~\ref{Figs3}(b), by setting $\gamma_{+0}=\gamma_{0-}=0$. In this scenario the local decay channels are absent, the operators entering the QMS no longer generate the full algebra on the whole many-body Hilbert space. Instead, the generated algebra acts fully only within a subspace. This is similar to the discussion regarding spin-$1/2$ system in the main text.

\subsection{Tavis--Cummings realization}
We expect to explore this trajectory-level synchronization effect on new experimental platforms that is capable of simultaneous collective and local monitoring. As a potential realization, we consider the Tavis-Cummings model implemented using superconducting circuit-QED, in which several superconducting qubits are coupled to a common microwave resonator.

We denote the resonator annihilation operator by $\hat a$, and the lowering and raising operators of qubit $k$ by $\hat\sigma_k^-$ and $\hat\sigma_k^+$, respectively. In a frame rotating at the drive frequency, the driven Tavis--Cummings Hamiltonian reads
\begin{align}
\hat H_{\rm TC} = \Delta_c \hat a^\dagger \hat a + \Omega \hat S^x + \Delta \hat S^z + g\left( \hat a^\dagger \hat S^- + \hat a \hat S^+ \right),
\end{align}
where $\Delta_c$ is the cavity detuning, $\Delta$ is the qubit detuning, $\Omega$ is the common transverse Rabi drive, and $g$ is the homogeneous qubit-resonator coupling strength.

\begin{figure}[htbp]
    \begin{centering}
    \includegraphics[width=0.4\linewidth]{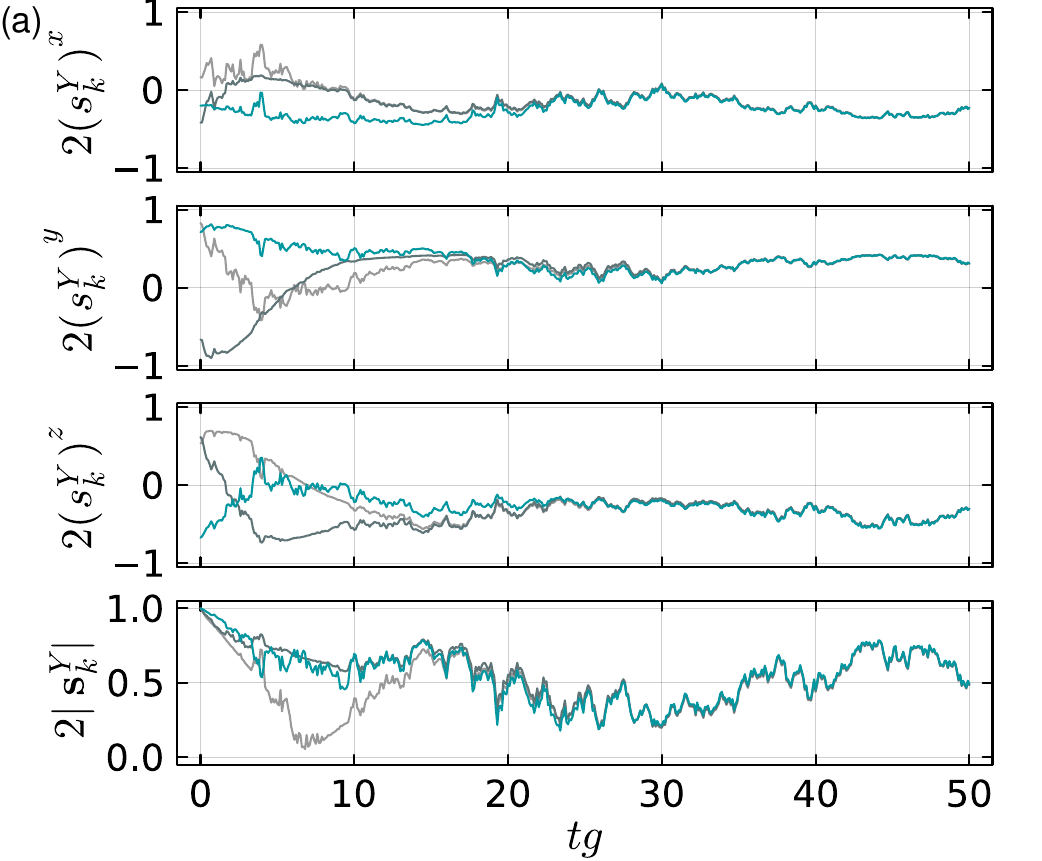}
    \includegraphics[width=0.4\linewidth]{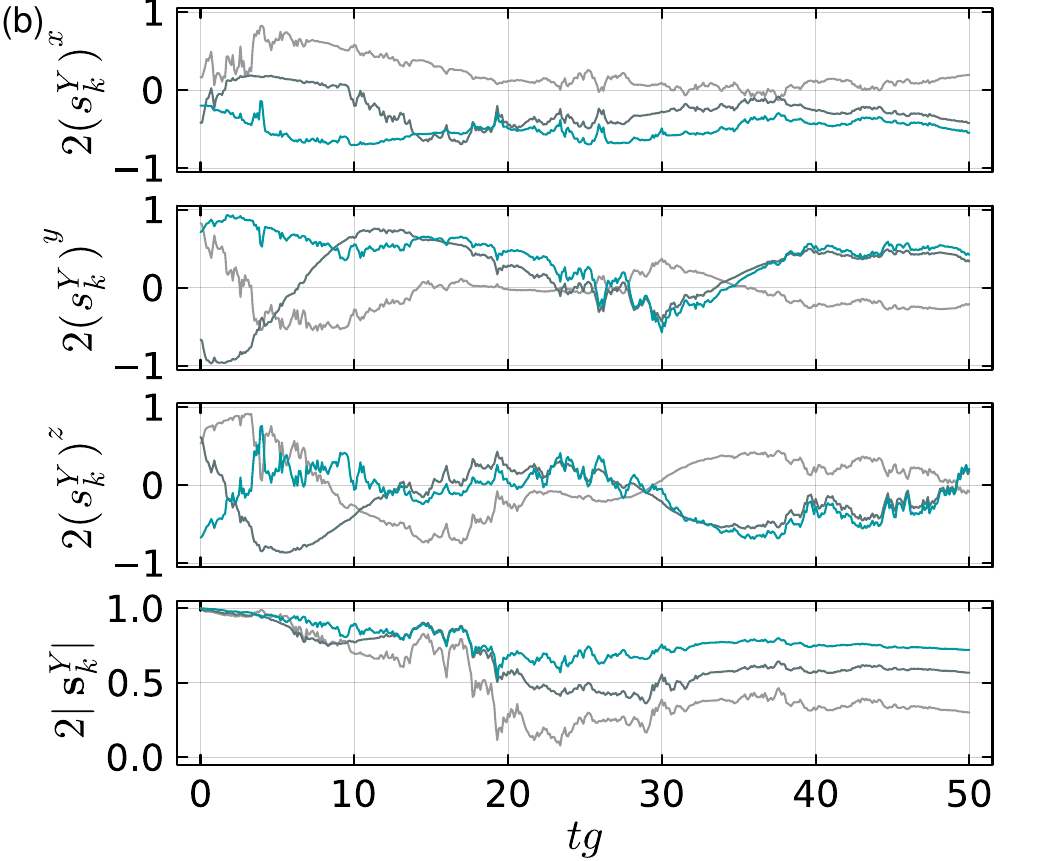}
    \par\end{centering}
\caption{
Synchronization in the Tavis--Cummings model. (a) Complete synchronization in the presence of local dissipation.
(b) Reduced collinear synchronization without local dissipation. 
}
    \label{Figs4}
\end{figure}

Additionally, each qubit undergoes homogeneous local relaxation $\hat\sigma_k^-$ and pure dephasing $\hat\sigma_k^z$, with corresponding dissipation rates $\gamma_{-}$ and $\gamma_{\phi}$.
The common cavity output is monitored by homodyne detection, giving the conditioned stochastic master equation
\begin{align}
d\hat\rho_c
=&
-i[\hat H_{\rm TC},\hat\rho_c] dt +\sum_{k}\left(\gamma_{-}\mathcal D[\hat\sigma_k^-]\hat\rho_c +\gamma_{\phi}\mathcal D[\hat\sigma_k^z]\hat\rho_c \right)dt\nonumber\\&+ \kappa\mathcal D[\hat a]\hat\rho_c dt+\sqrt{\eta\kappa}\mathcal H[e^{-i\phi}\hat a]\hat\rho_c dW_t .
\end{align}

For the numerical benchmark, we use the experimentally motivated scales summarized in the parameter audit: $N=3$, $g/2\pi=3.6\,{\rm MHz}$, $\kappa/2\pi=43\,{\rm MHz}$, $|\Delta_c|/2\pi=25\,{\rm MHz}$, $\gamma_-/2\pi=0.041\,{\rm MHz}$, $\gamma_\phi/2\pi=0.26\,{\rm MHz}$, and measurement efficiency $\eta=0.72$~\cite{Mlynek2014}. The transverse drive is set to $\Omega/2\pi=1\,{\rm MHz}$, with $\Delta=0$, $\phi=0$, and no direct cavity drive.

This Tavis-Cummings model implements the same essential structure as the minimal spin model, but the collective measurement is mediated by a dynamical cavity mode. In the presence of local channels, the calculated results in Fig. \ref{Figs4}(a) demonstrate complete synchronization. When the local channels are removed, as shown in Fig. \ref{Figs4}(b), the local spin polarizations of the qubits exhibit partial directional synchronization.

\section{Proof of reduced collinear synchronization}
In this section we clarify why the collective-only dynamics leads to reduced collinear synchronization rather than complete synchronization. 
The key point is that, without local dissipation, the conditioned dynamics acts only on the collective spin representation spaces, while the multiplicity spaces remain unresolved.

\subsection{Schur--Weyl decomposition}
We consider the collective-only limit, $\gamma=0$. The Hilbert space of $N$ spin-$1/2$ particles carries two mutually commuting actions: the collective $SU(2)$ action generated by $\hat S^\alpha=\frac{1}{2}\sum_{k=1}^N\hat\sigma_k^\alpha,\,\alpha=x,y,z ,$ and the action of the permutation group $S_N$ that permutes the spin labels.
Schur--Weyl duality gives the decomposition $\mathcal H = (\mathbb C^2)^{\otimes N} = \bigoplus_S \mathcal V_S\otimes \mathcal M_S$ as $SU(2)\times S_N$-representations~\cite{Rowe2012,Chase2008}.
Here $\mathcal V_S$ is the spin-$S$ irreducible representation of $SU(2)$, with dimension $2S+1$, and $\mathcal M_S$ is the corresponding multiplicity space, which carries the degeneracy of equivalent spin-$S$ representations.
The allowed values of $S$ are $S=\frac{N}{2},\frac{N}{2}-1,\ldots ,$ down to $0$ for even $N$ and $1/2$ for odd $N$.

Equivalently, one may choose basis states $\ket{S,\mu_s,\beta}$, where $\mu_s=-S,-S+1,\ldots,S$, $\beta=1,\ldots,m_S$ and $m_S=\dim\mathcal M_S$ is the multiplicity of the spin-$S$ representation.
In this basis, the collective spin operators do not change the multiplicity label $\beta$. Therefore $\hat S^\alpha = \bigoplus_S \hat S_S^\alpha\otimes \hat I_{\mathcal M_S},\,\alpha=x,y,z$.
Here $\hat S_S^\alpha$ is the spin-$S$ matrix acting on $\mathcal V_S$, and $\hat I_{\mathcal M_S}$ is the identity operator on $\mathcal M_S$. Thus collective operators act nontrivially only on $\mathcal V_S$ and are blind to the multiplicity space.

For $N$ spin-$1/2$ particles, the multiplicity is
\begin{align}
m_S
=\dim\mathcal M_S
=\binom{N}{\frac{N}{2}-S}-\binom{N}{\frac{N}{2}-S-1},
\label{eq:ms}
\end{align}
with the convention that $\binom{N}{r}=0$ for $r<0$ or $r>N$. 
The number of product states with total $S^z=\mu$ is $\binom{N}{N/2-\mu}$. On the other hand, a spin-$S'$ irreducible representation contains exactly one state with $\mu$ whenever $S'\geq |\mu|$. Therefore $\binom{N}{\frac{N}{2}-\mu} = \sum_{S'\geq |\mu|} m_{S'}$, where $m_{S'}=\dim\mathcal M_{S'}$ is the multiplicity of the spin-$S'$ representation. Taking the difference between the cases $\mu=S$ and $\mu=S+1$ gives Eq.~\eqref{eq:ms}.

\subsection{Synchronization within a frozen total-spin sector}
We now explain why the collective-only dynamics lead to reduced collinear synchronization. In the presence of a strong symmetry, a single conditioned trajectory asymptotically selects one symmetry sector~\cite{Benoist2014,SanchezMunoz2019}. Here these sectors are the total-spin sectors. Thus, for almost every trajectory, there exists a total spin $S_*$ such that
\begin{align}
P_S^Y(t)=\tr[\hat\Pi_S\hat\rho_c^Y(t)]
\longrightarrow
\delta_{S,S_*},
\end{align}
where $\hat\Pi_S$ projects onto $\mathcal V_S\otimes\mathcal M_S$. 
That is known as the dissipative freezing~\cite{SanchezMunoz2019}.
After this freezing, the conditional dynamics is restricted to the selected $S_*$-sector.
Consequently, the conditioned state asymptotically factorizes as
\begin{align}
\hat\rho_c^Y(t)
\xrightarrow{t\rightarrow\infty}
\frac{\hat\rho_{\mathcal V}^Y(t)\otimes \hat\rho_{\mathcal M}^Y}{\tr[\hat\rho_{\mathcal V}^Y(t)]\,\tr[\hat\rho_{\mathcal M}^Y]}.
\end{align}
Here $\hat\rho_{\mathcal V}^Y(t)\otimes \hat\rho_{\mathcal M}^Y$ denotes the normalized conditioned state after restriction to the selected sector $\mathcal V_{S_*}\otimes\mathcal M_{S_*}$. The component $\hat\rho_{\mathcal V}^Y(t)$ acts on the spin representation space $\mathcal V_{S_*}$, while $\hat\rho_{\mathcal M}^Y$ acts on the multiplicity space $\mathcal M_{S_*}$.
By the Schur--Weyl decomposition, the collective spin operators take the form $\hat\Pi_{S_*}\hat S^\alpha\hat\Pi_{S_*} = \hat S_{S_*}^\alpha\otimes \hat I_{\mathcal M_{S_*}},\,\alpha=x,y,z$. 
Therefore, the collective dynamics acts only on the spin representation space $\mathcal V_{S_*}$ and does not resolve the multiplicity space $\mathcal M_{S_*}$.
On $\mathcal V_{S_*}$, the stochastic contraction holds,
as the restricted Hamiltonian $\hat H_{S_*} = \Omega \hat S^x_{S_*}+\Delta \hat S^z_{S_*}$ together with the monitored operator $\hat S^z_{S_*}$ generates the full operator algebra $\mathcal B(\mathcal V_{S_*})$ on $\mathcal V_{S_*}$. 
Hence, along a fixed measurement record $Y$, for two initial states $\hat\rho_0$ and $\hat\rho_0'$ that select the same total-spin sector,
\begin{align}
\norm{\frac{\hat\rho_{\mathcal V}^Y(t;\hat\rho_{0})}{\tr[\hat\rho_{\mathcal V}^Y(t;\hat\rho_{0})]}-\frac{\hat\rho_{\mathcal V}^{Y}(t;\hat\rho_0^\prime)}{\tr[\hat\rho_{\mathcal V}^{Y}(t;\hat\rho_0^\prime)]}}_1
\xrightarrow{t\rightarrow\infty} 0.
\end{align}
Note that the contraction does not extend to the full space $\mathcal V_{S_*}\otimes\mathcal M_{S_*}$, because the collective operators are blind to $\mathcal M_{S_*}$.  The stochastic contraction occurs only in $\mathcal V_{S_*}$.

In the fixed $S_*$-sector, by Wigner--Eckart theorem~\cite{Sakurai2020},
the diagonal block of a local Pauli operator separates into a spin part and a multiplicity part, i.e. $\hat\Pi_{S_*}\hat\sigma_k^\alpha\hat\Pi_{S_*} = \hat S_{S_*}^\alpha\otimes \hat A_{k,S_*},\,\alpha=x,y,z$, where $\hat A_{k,S_*}$ acts only on $\mathcal M_{S_*}$. 
Taking the expectation values of Pauli operators gives
\begin{align}
s_k^{\alpha,Y}(t)\xrightarrow{t\rightarrow\infty}\frac{1}{2}\tr[\hat\rho_{c}^Y(t)\hat\sigma_k^\alpha]=\frac{1}{2}\frac{\tr[\hat\rho_{\mathcal V}^Y(t)\hat S_{S_*}^\alpha]}{
\tr[\hat\rho_{\mathcal V}^Y(t)]}\frac{\tr[
\hat\rho_{\mathcal M}^Y\hat A_{k,S_*}
]}{\tr[\hat\rho_{\mathcal M}^Y]}.
\end{align}
Equivalently, the local Pauli polarization vector takes the asymptotic form
\begin{align}
\mathbf s_k^Y(t) \xrightarrow{t\rightarrow\infty} a_k^Y\mathbf v^Y(t), 
\end{align}
where $a_k^Y=\frac{\tr[\hat\rho_{\mathcal M}^Y\hat A_{k,S_*}]}{2\tr[\hat\rho_{\mathcal M}^Y]}$ is the site-dependent scalar amplitude, and $\mathbf v^Y(t)=\frac{ \tr[\hat\rho_{\mathcal V}^Y(t)\hat {\mathbf{S}}_{S_*}]}{\tr[\hat\rho_{\mathcal V}^Y(t)]}$ is the common trajectory-dependent collective vector for all sites. 
This establishes reduced collinear synchronization. The local spin-polarization vectors share a common collective vector and therefore become collinear, while their amplitudes need not become identical. Since the multiplicity amplitudes $a_k^Y$ are fixed after sector selection, the amplitude ratios approach fixed values,
i.e. 
\begin{align}
\frac{\norm{\mathbf s_i^Y(t)}}{\norm{\mathbf s_j^Y(t)}}
\xrightarrow{t\rightarrow\infty}
\frac{|a_i^Y|}{|a_j^Y|}.
\end{align}

\end{document}